\theoremstyle{definition}
\newtheorem{definition}{Definition}
\theoremstyle{remark}
\newtheorem{remark}{Remark}
\theoremstyle{theorem}
\newtheorem{theorem}{Theorem}
\theoremstyle{lemma}
\newtheorem{lemma}{Lemma}
\theoremstyle{corollary}
\theoremstyle{proposition}
\newtheorem{proposition}{Proposition}
\begin{document}

% Use the \preprint command to place your local institutional report number 
% on the title page in preprint mode.
% Multiple \preprint commands are allowed.
%\preprint{}

\title{Generalized Grad-Shafranov equation for non-axisymmetric MHD equilibria} %Title of paper

% repeat the \author .. \affiliation  etc. as needed
% \email, \thanks, \homepage, \altaffiliation all apply to the current author.
% Explanatory text should go in the []'s, 
% actual e-mail address or url should go in the {}'s for \email and \homepage.
% Please use the appropriate macro for the type of information

% \affiliation command applies to all authors since the last \affiliation command. 
% The \affiliation command should follow the other information.

\author{J. W. Burby}
\affiliation{Los Alamos National Laboratory, Los Alamos, New Mexico 87545, USA}

\author{N. Kallinikos}
\affiliation{Mathematics Institute, University of Warwick, Coventry CV4 7AL, U.K.}

\author{R. S. MacKay}
\affiliation{Mathematics Institute, University of Warwick, Coventry CV4 7AL, U.K.}

 %\affiliation{New York University, New York, New York 10012, USA}
%\author{A. Cerfon}
%\affiliation{Courant Institute of Mathematical Sciences, New York, New York 10012, USA}
%\email[]{Your e-mail address}
%\homepage[]{Your web page}
%\thanks{}
%\altaffiliation{}

% Collaboration name, if desired (requires use of superscriptaddress option in \documentclass). 
% \noaffiliation is required (may also be used with the \author command).
%\collaboration{}
%\noaffiliation

\date{\today}

\begin{abstract}
The structure of static MHD equilibria that admit continuous families of Euclidean symmetries is well understood. Such field configurations are governed by the classical Grad-Shafranov equation, which is a single elliptic PDE in two space dimensions. By revealing a hidden symmetry, we show that in fact all { smooth} solutions of the equilibrium equations { with non-vanishing pressure gradients away from the magnetic axis} satisfy a generalization of the Grad-Shafranov equation. In contrast to solutions of the classical Grad-Shafranov equation, solutions of the generalized equation are not automatically equilibria, but instead only satisfy force balance averaged over the one-parameter hidden symmetry. We then explain how the generalized Grad-Shafranov equation can be used to reformulate the problem of finding exact three-dimensional smooth solutions of the equilibrium equations as finding an optimal volume-preserving symmetry.
\end{abstract}

\pacs{}% insert suggested PACS numbers in braces on next line

\maketitle %\maketitle must follow title, authors, abstract and \pacs

% Body of paper goes here. Use proper sectioning commands. 
% References should be done using the \cite, \ref, and \label commands
%%%%
\section{Introduction}
The ideal magnetohydrostatic equilibrium equations governing a magnetic field $\bm{B}$ supported by pressure $\mathsf{p}$,
\begin{gather}
(\nabla\times\bm{B})\times\bm{B} = \nabla\mathsf{p}\label{force_balance}\\
\nabla\cdot\bm{B}  =0,\label{divB}
\end{gather}
provide an approximate description of quiescent magnetized plasma on length scales comparable to the plasma's container. The main well-understood class of solutions that support non-zero pressure gradients $\nabla\mathsf{p}$ comprises those that admit rotational, translational, or screw symmetry. We will refer to such solutions as \emph{rigidly-symmetric}. Rigidly-symmetric solutions are sufficient to describe the nominal operating regimes for magnetic fusion experiments such as tokamaks, reverse-field pinches, and spheromaks. On the other hand, stellarators, which may offer the shortest path to a working fusion reactor,\cite{Boozer_stel_2020} are not rigidly symmetric. 
{ The fundamental design principle underlying stellarators is to purposefully break Euclidean symmetry in order to produce rotational transform without a significant plasma current. (See Ref.\,\onlinecite{Bhattacharjee_1992} for a clear exposition of the connection between magnetic axis geometry and rotational transform.) Therefore non-rigidly-symmetric solutions of Eqs.\,\eqref{force_balance}-\eqref{divB} are of considerable practical interest.}
%Instead, stellarators are believed to be described by a more exotic class of solutions, or perhaps approximate solutions,\cite{Ginsberg_2020} of the equilibrium equations. 
{ Such} solutions have become loosely termed ``three-dimensional solutions'' due to the fact that rigidly-symmetric solutions are inherently two-dimensional. Indeed, assuming rigid symmetry, the PDE system\,\eqref{force_balance}-\eqref{divB} reduces to an elliptic PDE in two space dimensions for a single flux function $\psi$. This simplified PDE is known as the Grad-Shafranov (GS) equation; we refer the reader to Eq.\,(2.65) of Ref.\,\onlinecite{White_tc_2014} for the axisymmetric case, and Ref.\,\onlinecite{Smith_Rieman_2004} for the helical case.

Finding three-dimensional solutions with non-zero pressure gradients is inherently more complicated than finding solutions that admit continuous Euclidean symmetries. Grad's long-standing conjecture\cite{Grad_conj_1967} even postulates that continuous families of smooth three-dimensional solutions with pressure gradients cannot exist. If Grad's conjecture is correct, then it is still possible that three-dimensional solutions exist, but adiabatically-evolving three-dimensional solutions cannot. In the absence of adiabatically-evolving three-dimensional solutions, the theory of gradual profile evolution in stellarators would have to discard the constraints \,\eqref{force_balance}-\eqref{divB}, at least at their face value, in favor of a less-restrictive approximate notion of plasma equilibrium.

Although there are rigorous constructions\cite{Lortz_1970,Meyer_1958} of special smooth three-dimensional equilibria with closed $\bm{B}$-lines, { numerical studies\cite{Hudson_Kraus_2017,Mikhailov_2019,Kim_2020} demonstrating that singular current sheets can be avoided in three-dimensional equilibrium calculations, and numerically-verified\cite{Kim_2020}} asymptotic expansions for three-dimensional solutions with small shear,\cite{Weitzner_2016,Sengupta_Weitzner_2019} the deepest theory of three-dimensional solutions available today is due to Bruno and Laurence,\cite{Bruno_Laurence_1996} who established existence of nearly-axisymmetric three-dimensional solutions with stepped pressure profiles. Hudson and collaborators presently wield the Bruno-Laurence theory as a practical tool\cite{Hudson_spec_2012} for studying two- and three-dimensional\cite{Hudson_2011} equilibria in stellarators,\cite{Loizu_2015,Loizu_stel_2016} tokamaks, and reverse-field piches.\cite{Qu_2020} However, due to the discontinuities in the pressure function, the Bruno-Laurence solutions are not smooth. One can imagine that there might exist limits of Bruno-Laurence solutions with { continuous pressure and pressure gradients supported on a fat Cantor set of flux surfaces.\cite{Kraus_2017} (These are the pathological solutions envisaged by Grad.\cite{Grad_conj_1967})} While discontinuous, or at least nearly-discontinuous pressure profiles may be justified on physical grounds, such discontinuities introduce various complications in theories of nearly-quiescent plasma dynamics built upon leading-order magnetohydrostatic equilibrium. { (Even studies that employ specially-tuned continuous\cite{Hudson_Kraus_2017} or discontinuous\cite{Loizu_2015} rotational transform profiles to achieve continuous pressure profiles cannot ensure a high degree of smoothness for the magnetic field.)} For example, the magnetic moment adiabatic invariant for single-particle dynamics is conserved over larger time intervals in smoother magnetic fields.\cite{Kruskal_1962} Therefore the Bruno-Laurence theory does not eliminate the potential benefit of a satisfactory theory of smooth solutions, or smooth approximate solutions of Eqs.\,\eqref{force_balance}-\eqref{divB}.

The purpose of this Article is to establish { fundamental results concerning the structure of smooth three-dimensional equilibria and the} the existence of smooth approximate solutions of the magnetohydrostatic equations. Specifically we will show that a necessary condition for a smooth nondegenerate solution of Eqs.\,\eqref{force_balance}-\eqref{divB} to exist is that $\bm{B}$ admits a stream-function representation in terms of a single function $\psi$, and that $\psi$ satisfies a three-dimensional generalization of the Grad-Shafranov equation. Therefore smooth approximate solutions of the magnetohydrostatic equations may be constructed as solutions of the generalized Grad-Shafranov (GGS) equation. Here ``nondegenerate" refers to solutions with $(\bm{B},\nabla\times\bm{B})$ linearly-independent everywhere except on a single magnetic axis.

Where solutions of the classical GS equation are invariant under continuous families of Euclidean symmetries, solutions of the generalized GGS equation are invariant under continuous families of non-Euclidean volume-preserving symmetries. Otherwise, the GGS equation shares essentially all of its qualitative features, e.g.~ellipticity, variational principle, etc, with the classical Grad-Shafranov equation.  It is fair to say that all smooth nondegenerate solutions, three-dimensional or not, admit a ``hidden" symmetry. The symmetry is ``hidden" in the sense that an observer in the lab frame may see nothing obviously symmetrical about the field configuration. This suggests that, perhaps, the term ``three-dimensional solution" is actually a misnomer.

After establishing our main existence result, we will discuss the precise sense in which solutions of the GGS equation approximately satisfy Eqs.\,\eqref{force_balance}-\eqref{divB}. We will prove that while solutions of the GGS equation satisfy Eq.\,\eqref{divB} exactly, and possess nested toroidal flux surfaces, solutions of the GGS equation in general only satisfy the force-balance equation \eqref{force_balance} \emph{averaged over the symmetry} associated with the GGS equation. This observation will enable us to formulate a novel method for improving the accuracy of our approximate solutions, and therefore for searching for smooth three-dimensional MHD equilibria. This new approach reformulates the task of finding three-dimensional solutions in the space of pairs $(\bm{B},\mathsf{p})$ as a search for an optimal symmetry field, which is a divergence-free vector field with closed lines and constant period. 

We will express our results in terms of standard Heaviside vector calculus notation as much as possible. However, we take the liberty to use differential forms in our intermediate calculations whenever it is convenient. Readers who are unfamiliar with differential forms may consult the recent tutorial article \onlinecite{MacKay_tutorial_2020} for a primer, and Ref.\,\onlinecite{Abraham_2008} for an in-depth discussion.

\section{Definitions: averaged vector calculus}\label{def:circle_symmetry}
In this section, as well as the rest of the article, $Q\subset\mathbb{R}^3$ will stand for the ``plasma volume,'' i.e.~a region in $\mathbb{R}^3$ that is diffeomorphic to the solid torus $D^2\times S^1$. Here $D^2$ is the closed 2-dimensional unit disc and $S^1 = \mathbb{R}\text{ mod }2\pi$. The purpose of this section is to introduce several definitions that will be useful when discussing MHD equilibria in $Q$. 

We will begin by defining some concepts that are global in nature.
\begin{definition}
A \emph{circle-action} on $Q$ is a one-parameter family of { $C^\infty$} diffeomorphisms $\Phi_\theta:Q\rightarrow Q$ that satisfies the following properties for each $\theta,\theta_1,\theta_2\in S^1$:
\begin{align}
\Phi_{\theta+2\pi} &= \Phi_\theta\hspace{3em}(\text{periodicity})\\
\Phi_0 & = \text{id}_Q\hspace{2.7em}(\text{identity})\\
\Phi_{\theta_1+\theta_2}& = \Phi_{\theta_1}\circ \Phi_{\theta_2}\hspace{.2em}(\text{additivity}).
\end{align}
{ Moreover, the mapping $(\bm{x},\theta)\mapsto \Phi_\theta(\bm{x})$ is required to be $C^\infty$.} A \emph{volume-preserving circle-action} of $Q$ is a circle-action that does not change the volume of any subregion $U\subset Q$. In other words, $\Phi_\theta$ is a volume-preserving circle-action if
\begin{align}
\int_{U}\,d^3\bm{x} = \int_{\Phi_\theta(U)}\,d^3\bm{x}
\end{align}
for each $\theta\in S^1$ and each (relatively) open $U\subset Q$.
\end{definition}

\begin{definition}\label{def:infinitesimal_generator}
The \emph{infinitesimal generator} of a circle-action $\Phi_\theta$ is the vector field $\bm{u}$ on $Q$ defined by
\begin{align}
\bm{u}(\bm{x}) = \frac{d}{d\theta}\bigg|_0\Phi_\theta(\bm{x}).
\end{align} 
\end{definition}

\begin{remark}
The infinitesimal generator of a circle-action is always tangent to the boundary of $Q$. Moreover, a circle-action is volume-preserving if and only if its infinitesimal generator is divergence-free.
\end{remark}

\begin{definition}\label{def:averaged_metric}
Given a circle-action $\Phi_\theta$, the associated \emph{averaged metric} is the metric tensor $\overline{g}$ on $Q$ given by
\begin{align}\label{eq:averaged_metric}
\overline{g} = \fint \Phi_\theta^*g\,d\theta.
\end{align}
Here $g= g_{ij}\,dx^i\,dx^j$ is the usual flat metric on $\mathbb{R}^3$, $\fint = \frac{1}{2\pi}\int_0^{2\pi}$ is the normalized integral over $\theta$, and $\Phi_\theta^*g $ denotes pullback of $g$ along the mapping $\Phi_\theta$. 
\end{definition}
\begin{remark}
Set $\Phi_\theta = (\Phi_\theta^1,\Phi_\theta^2,\Phi_\theta^3)$ and let $[P(\theta)],[\overline{g}],[g]$ be the $3\times 3$ matrices with components
\begin{align}
[P(\theta)]_{ij} = & \partial_j\Phi_\theta^i\\
[\overline{g}]_{ij} = & \overline{g}_{ij}\\
[g]_{ij} = & g_{ij}.
\end{align}
Since the pullback $\Phi_\theta^*g = (g_{ij}\circ \Phi_\theta)\, \partial_k\Phi_\theta^i\,\partial_l\Phi_\theta^j \,dx^l\,dx^k$, these three matrices are related by the formula
\begin{align}
[\overline{g}] = \fint [P(\theta)]^T[g][P(\theta)]\,d\theta.
\end{align}
Aside from { indicating} how to compute $\overline{g}$, this formula also shows that $[\overline{g}]$ is positive definite. Therefore $\overline{g}$ is a genuine { $C^\infty$-smooth} metric tensor.
\end{remark}

\begin{definition}\label{def:volume_density}
Given a metric $h$ on $Q$, the \emph{$h$-volume} of a region $U\subset Q$ is given by
\begin{align}
\text{vol}_h(U) = \int_U \sqrt{\text{det}(h)}\,d^3\bm{x},
\end{align}
where $\text{det}(h)$ is the determinant of the matrix of $h$-coefficients in standard Euclidean coordinates. The positive scalar  $\rho_h\equiv\sqrt{\text{det}(h)}$ is the $h$-density.
\end{definition}
\begin{remark}
Note that the $h$-density is $1$ when { $h = g = dx^2 + dy^2 +dz^2$} is the standard flat metric on $\mathbb{R}^3$. 
\end{remark}

Next we will introduce some concepts that are local in nature.
\begin{definition}\label{def:averaged_dot_cross}
Given the averaged metric $\overline{g}$ associated with a circle-action $\Phi_\theta$, the \emph{averaged dot product} of a pair of vector fields $\bm{w}_1,\bm{w}_2$ is the scalar field on $Q$ given by
\begin{align}
\bm{w}_1\overline{\cdot}\,\bm{w}_2 = \overline{g}(\bm{w}_1,\bm{w}_2).
\end{align}
The \emph{averaged cross product} of $\bm{w}_1$ and $\bm{w}_2$ is the unique vector field $\bm{w}_1\overline{\times}\bm{w}_2$ that satisfies
\begin{align}
\bm{w}_1\overline{\times}\,\bm{w}_2\overline{\cdot}\,\bm{v} = \rho_{\overline{g}}\, \bm{w}_1\times\bm{w}_2\cdot\bm{v}, 
\end{align}
for each vector field $\bm{v}$ on $Q$. Recall that $\rho_{\overline{g}}$ is the $\overline{g}$-density introduced in Definition\,\ref{def:volume_density}.
\end{definition}
\begin{remark}
If $[\bm{w}_1],[\bm{w}_2],[\overline{g}]$ are the component matrices of $\bm{w}_1,\bm{w}_2,\overline{g}$, the averaged dot product may also be written
\begin{align}
\bm{w}_1\overline{\cdot}\,\bm{w}_2 = [\bm{w}_1]^T[\overline{g}][\bm{w}_2].
\end{align}
\end{remark}

\begin{definition}\label{def:averaged_grad_div_curl}
Given an averaged metric $\overline{g}$, the \emph{averaged gradient} of a { $C^\infty$} scalar field $\varphi$ is the unique vector field $\overline{\nabla}\varphi$ that satisfies
\begin{align}
\overline{\nabla}\varphi\,\overline{\cdot}\,\bm{v} = \nabla\varphi\cdot \bm{v}
\end{align}
for each vector field $\bm{v}$. The \emph{averaged divergence} of a { $C^\infty$} vector field $\bm{w}$ is the scalar field $\overline{\nabla}\,\overline{\cdot}\,\bm{w}$ given by
\begin{align}
\overline{\nabla}\,\overline{\cdot}\,\bm{w} = \rho_{\overline{g}}^{-1}\nabla\cdot(\rho_{\overline{g}}\,\bm{w}).
\end{align}
The \emph{averaged curl} of { the same} vector field $\bm{w}$ is the unique vector field $\overline{\nabla}\,\overline{\times}\bm{w}$ that satisfies
\begin{align}
\overline{\nabla}\,\overline{\times}\,\bm{w}\,\overline{\cdot}\,\bm{v}_1\overline{\times}\bm{v}_2 &= \bm{v}_1\,\overline{\cdot}\,\overline{\nabla}(\bm{w}\,\overline{\cdot}\,\bm{v}_2) - \bm{v}_2\,\overline{\cdot}\,\overline{\nabla}(\bm{w}\,\overline{\cdot}\,\bm{v}_1)\nonumber\\
 &- \bm{w}\,\overline{\cdot}\,[\bm{v}_1,\bm{v}_2],
\end{align}
for each pair of vector fields $\bm{v}_1,\bm{v}_2$. Here $[\bm{v}_1,\bm{v}_2] = \bm{v}_1\cdot\nabla\bm{v}_2 - \bm{v}_2\cdot\nabla\bm{v}_1$ is the usual commutator of vector fields. 
\end{definition}

The definitions of averaged vector calculus operators in this section should be compared with standard formulas for flat-metric vector calculus in curvilinear coordinates. 

\section{Supporting Lemmas}
This Section will establish some technical lemmas that will be useful when proving this Article's main result. Our presentation depends on concepts from exterior differential calculus, but we write the main conclusions in vector calculus language. For an exterior calculus tutorial for plasma physicists, see Ref.\,\onlinecite{MacKay_tutorial_2020}. 

\begin{lemma}[Structure Equations]\label{lemma:u_struct}
Let $\bm{u}$ be the infinitesimal generator of a volume-preserving circle-action $\Phi_\theta$. Set $R^2 = \bm{u}\,\overline{\cdot}\,\bm{u}$. Then we have the identities
\begin{gather}
\bm{u}\,\overline{\cdot}\,\overline{\nabla}R^2 = 0\\
\bm{u}\,\overline{\cdot}\,\overline{\nabla} \rho_{\overline{g}} = 0\\
\left(\overline{\nabla}\,\overline{\times}\,\bigg(\frac{\bm{u}}{R^2}\bigg)\right)\,\overline{\times}\,\bigg(\frac{\bm{u}}{R^2}\bigg) = 0.\label{force_free}
\end{gather}
\end{lemma}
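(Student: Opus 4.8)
The plan rests on a single structural observation: the averaged metric $\overline{g}$ is invariant under the very circle-action that generates it. Indeed, using additivity and the translation-invariance of $\fint\,d\theta$ over $S^1$, one finds $\Phi_\alpha^*\overline{g} = \fint \Phi_\alpha^*\Phi_\theta^* g\,d\theta = \fint \Phi_{\theta+\alpha}^* g\,d\theta = \overline{g}$ for every $\alpha$. Differentiating in $\alpha$ at $\alpha=0$ shows that $\bm{u}$ is a Killing field of $\overline{g}$, i.e.~$\mathcal{L}_{\bm{u}}\overline{g}=0$. The second ingredient is that the averaged operators of Definitions \ref{def:averaged_dot_cross}--\ref{def:averaged_grad_div_curl} are nothing more than the ordinary Riemannian vector-calculus operators associated with $\overline{g}$: writing $\bm{a}^\flat = \overline{g}(\bm{a},\cdot)$ for the $\overline{g}$-flat and $\mu_{\overline{g}}=\rho_{\overline{g}}\,dx\wedge dy\wedge dz$ for the $\overline{g}$-volume form, one checks directly from the definitions that $\bm{a}\,\overline{\cdot}\,\bm{b}=\bm{a}^\flat(\bm{b})$, that $(\overline{\nabla}\varphi)^\flat = d\varphi$, and that the averaged curl is characterized by $\iota_{(\overline{\nabla}\,\overline{\times}\,\bm{w})}\mu_{\overline{g}} = d(\bm{w}^\flat)$ --- the last identity being exactly the defining relation in Definition \ref{def:averaged_grad_div_curl} rewritten with the Cartan formula for $d$ of a one-form. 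With this dictionary in hand the whole lemma becomes a statement about a Killing field.

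For the first identity, note that $\bm{u}\,\overline{\cdot}\,\overline{\nabla}\varphi = d\varphi(\bm{u}) = \mathcal{L}_{\bm{u}}\varphi$ for any scalar $\varphi$, so I must show $\mathcal{L}_{\bm{u}}R^2=0$. Since $R^2 = \overline{g}(\bm{u},\bm{u})$ and $[\bm{u},\bm{u}]=0$, the Leibniz rule gives $\mathcal{L}_{\bm{u}}R^2 = (\mathcal{L}_{\bm{u}}\overline{g})(\bm{u},\bm{u}) = 0$ by the Killing property. The second identity is $\mathcal{L}_{\bm{u}}\rho_{\overline{g}}=0$. A Killing field preserves the Riemannian volume form, $\mathcal{L}_{\bm{u}}\mu_{\overline{g}}=0$; writing $\mu_{\overline{g}}=\rho_{\overline{g}}\,\mu_g$ and using that $\bm{u}$ is also $g$-divergence-free (the action is volume-preserving, so $\mathcal{L}_{\bm{u}}\mu_g=0$), I get $(\mathcal{L}_{\bm{u}}\rho_{\overline{g}})\,\mu_g=0$, hence $\mathcal{L}_{\bm{u}}\rho_{\overline{g}}=0$.

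The third identity is the substantive one. Setting $\bm{w}=\bm{u}/R^2$ and unwinding the averaged cross product, $(\overline{\nabla}\,\overline{\times}\,\bm{w})\,\overline{\times}\,\bm{w}=0$ is equivalent to the single form equation $\iota_{\bm{w}}\,d(\bm{w}^\flat)=0$, where $\bm{w}^\flat = u^\flat/R^2$. The engine is again the Killing property: since $\mathcal{L}_{\bm{u}}\overline{g}=0$ and $[\bm{u},\bm{u}]=0$ one has $\mathcal{L}_{\bm{u}}u^\flat=0$, so Cartan's formula $\mathcal{L}_{\bm{u}}u^\flat = \iota_{\bm{u}}du^\flat + d(\iota_{\bm{u}}u^\flat)$ together with $\iota_{\bm{u}}u^\flat = R^2$ yields $\iota_{\bm{u}}du^\flat = -dR^2$. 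Expanding $d(\bm{w}^\flat) = R^{-2}du^\flat - R^{-4}\,dR^2\wedge u^\flat$ and contracting with $\bm{u}$, the Leibniz rule for $\iota_{\bm{u}}$ gives $\iota_{\bm{u}}(dR^2\wedge u^\flat) = (\iota_{\bm{u}}dR^2)\,u^\flat - R^2\,dR^2$; here $\iota_{\bm{u}}dR^2 = \mathcal{L}_{\bm{u}}R^2 = 0$ by the first identity, so the surviving terms are $\iota_{\bm{u}}d(\bm{w}^\flat) = -R^{-2}dR^2 + R^{-2}dR^2 = 0$. Since $\bm{w}$ is a scalar multiple of $\bm{u}$, $\iota_{\bm{w}}d(\bm{w}^\flat)=0$ as well, which is the claim.

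The main obstacle I anticipate is not any one of these computations but getting the dictionary of the first paragraph exactly right --- in particular verifying that the averaged curl really is the $\overline{g}$-curl (the orientation and $\rho_{\overline{g}}$ factors must be tracked carefully) and that the translation $(\overline{\nabla}\,\overline{\times}\,\bm{w})\,\overline{\times}\,\bm{w}=0 \Leftrightarrow \iota_{\bm{w}}d(\bm{w}^\flat)=0$ is a genuine equivalence rather than merely an implication. Once that correspondence is nailed down, the first identity feeds directly into the third and the Killing property does all the remaining work.
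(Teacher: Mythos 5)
Your proof is correct and takes essentially the same route as the paper's: both rest on the observation that averaging makes $\overline{g}$ invariant under $\Phi_\theta$ (so $\bm{u}$ is a Killing field, $L_{\bm{u}}\overline{g}=0$), translate the averaged operators into exterior calculus via $\iota_{(\overline{\nabla}\,\overline{\times}\,\bm{w})}\,\rho_{\overline{g}}d^3\bm{x} = \mathbf{d}(\bm{w}^\flat)$, and obtain the third identity from Cartan's formula applied to $\iota_{\bm{u}}\overline{g}$. The only difference is organizational: the paper compares two evaluations of $L_{\bm{u}/R^2}(\iota_{\bm{u}/R^2}\overline{g})$, while you expand $\mathbf{d}(\bm{w}^\flat)$ by the Leibniz rule and contract with $\bm{u}$ using $\iota_{\bm{u}}\mathbf{d}(\iota_{\bm{u}}\overline{g}) = -\mathbf{d}R^2$ --- algebraically the same computation.
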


\begin{proof}
The first identity follows immediately from
\begin{align}
\bm{u}\,\overline{\cdot}\,\overline{\nabla} R^2  &= L_{\bm{u}}R^2\nonumber\\
& = (L_{\bm{u}}\overline{g})(\bm{u},\bm{u}) + 2 \overline{g}(\bm{u},L_{\bm{u}}\bm{u}) = 0,
\end{align}
where $L_{\bm{u}}$ denotes the Lie derivative along $\bm{u}$, and we have used $L_{\bm{u}}\overline{\tau} = 0$ for any tensor $\overline{\tau} = \fint\Phi_\theta^*\tau\,d\theta$.
The second identity is similarly straightforward:
\begin{align}
0 = L_{\bm{u}} (\rho_{\overline{g}}\,d^3\bm{x}) = (L_{\bm{u}}\rho_{\overline{g}})\,d^3\bm{x},
\end{align}
since $\nabla\cdot\bm{u} = 0$.
To prove the third identity, first note that the $1$-form $\nu = \iota_{\bm{u}}\overline{g}$ satisfies $L_{\bm{u}} \nu  = \iota_{\bm{u}}L_{\bm{u}}\overline{g}  = 0$. Then recall the identity from Cartan calculus 
\begin{align}
L_{ f\bm{v}}\alpha = f L_{\bm{v}}\alpha + (\iota_{\bm{v}}\alpha)\,\mathbf{d}f,
\end{align}
where $\mathbf{d}$ is the exterior derivative, $\bm{v}$ is any vector field, $\iota_{\bm{v}}$ is the interior product (contraction with the first index), and $f$ is any $0$-form. Finally observe that on the one hand
\begin{align}
L_{\bm{u}/R^2}( \iota_{\bm{u}/R^2}\overline{g})& = (1/R^2) L_{\bm{u}} ( \iota_{\bm{u}/R^2}\overline{g}) + \mathbf{d}(1/R^2) \nonumber\\
&= \mathbf{d}(1/R^2),
\end{align}
by the preceding remarks, while on the other hand
\begin{align}
L_{\bm{u}/R^2}( \iota_{\bm{u}/R^2}\overline{g}) =& \mathbf{d}(1/R^2) + \iota_{\bm{u}/R^2}\mathbf{d}( \iota_{\bm{u}/R^2}\overline{g})\nonumber\\
=& \mathbf{d}(1/R^2) + \iota_{\bm{u}/R^2}\iota_{\overline{\nabla}\,\overline{\times}(\bm{u}/R^2)}\,\rho_{\overline{g}}d^3\bm{x},
\end{align}
by Cartan's formula for the Lie derivative. (Note that, while $\overline{g}$ is not a differential form, $\iota_{\bm{u}/R^2}\overline{g}$ is.) It follows that
\begin{align}
0 = &\iota_{\bm{v}}\iota_{\bm{u}/R^2}\iota_{\overline{\nabla}\,\overline{\times}(\bm{u}/R^2)}\,\rho_{\overline{g}}d^3\bm{x}\nonumber\\
 = & \rho_{\overline{g}}\left(\overline{\nabla}\,\overline{\times}\bigg(\frac{\bm{u}}{R^2}\bigg)\right)\times \bigg(\frac{\bm{u}}{R^2}\bigg)\cdot \bm{v}\nonumber\\
 = & \left(\overline{\nabla}\,\overline{\times}\bigg(\frac{\bm{u}}{R^2}\bigg)\right)\,\overline{\times} \bigg(\frac{\bm{u}}{R^2}\bigg)\,\overline{\cdot}\,\bm{v},
\end{align}
for each vector field $\bm{v}$, which is equivalent to the desired identity due to the non-degeneracy of $\overline{g}$.
\end{proof}

\begin{remark}
We use the notation $R^2$ for $\bm{u}\overline{\cdot}\bm{u}$ to recall the case of axisymmetry, where the infinitesimal generator of azimuthal rotations is $\bm{u} = R\,\bm{e}_\phi$, with $R$ the cylindrical radius and $\bm{e}_\phi$ the azimuthal unit vector. 
\end{remark}

\begin{lemma}[stream function representation for symmetric magnetic fields]\label{lemma:stream_function}
Suppose $\bm{B}$ is a { $C^\infty$}  divergence-free vector field on $Q$ that satisfies $[\bm{u},\bm{B}] = 0$, where $\bm{u}$ is the infinitesimal generator of a volume-preserving circle-action $\Phi_\theta$. Then there exists a { $C^\infty$}  function $\psi: Q\rightarrow\mathbb{R}$, called \emph{the stream function}, such that 
\begin{align}
\bm{u}\times\bm{B} = \nabla\psi,
\end{align}
or, equivalently,
\begin{align}
\bm{u}\,\overline{\times}\,\bm{B} = \rho_{\overline{g}}\,\overline{\nabla}\psi.
\end{align}
\end{lemma}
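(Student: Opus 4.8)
The plan is to reduce the vector identity $\bm{u}\times\bm{B}=\nabla\psi$ to the exactness of a single closed $1$-form on the solid torus $Q$, and then to settle that exactness by a period computation that exploits $\iota_{\bm u}\lambda=0$. First note that the two displayed equations are equivalent: by the defining properties of $\overline{\times}$ and $\overline{\nabla}$ one has, for every $\bm v$, $(\bm u\,\overline{\times}\,\bm B)\,\overline{\cdot}\,\bm v=\rho_{\overline g}\,(\bm u\times\bm B)\cdot\bm v$ and $(\rho_{\overline g}\,\overline{\nabla}\psi)\,\overline{\cdot}\,\bm v=\rho_{\overline g}\,\nabla\psi\cdot\bm v$, so since $\rho_{\overline g}>0$ and $\overline g$ is nondegenerate the averaged identity holds if and only if $\bm u\times\bm B=\nabla\psi$. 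Hence it suffices to produce a $C^\infty$ function $\psi$ with $\bm u\times\bm B=\nabla\psi$.

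Next I would translate to forms. Let $d^3\bm x$ denote the flat volume form, let $\beta=\iota_{\bm B}\,d^3\bm x$ be the flux $2$-form of $\bm B$, and set $\lambda=-\iota_{\bm u}\beta$, the $1$-form metric-dual to $\bm u\times\bm B$, so that $\bm u\times\bm B=\nabla\psi$ is equivalent to $\lambda=\mathbf d\psi$. I would then show $\lambda$ is closed by Cartan calculus: since $\mathbf d\,d^3\bm x=0$ and $\nabla\cdot\bm B=0$ we get $\mathbf d\beta=L_{\bm B}\,d^3\bm x=(\nabla\cdot\bm B)\,d^3\bm x=0$, whence $\mathbf d\lambda=-L_{\bm u}\beta$; and $L_{\bm u}\beta=\iota_{[\bm u,\bm B]}\,d^3\bm x+\iota_{\bm B}L_{\bm u}\,d^3\bm x$ vanishes because $[\bm u,\bm B]=0$ by hypothesis and $L_{\bm u}\,d^3\bm x=(\nabla\cdot\bm u)\,d^3\bm x=0$ since $\Phi_\theta$ is volume-preserving. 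Thus $\mathbf d\lambda=0$.

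Now pass to exactness. Because $Q$ is diffeomorphic to $D^2\times S^1$ it is homotopy equivalent to $S^1$, so $H^1_{\mathrm{dR}}(Q)\cong\mathbb R$ and $H_1(Q;\mathbb Z)\cong\mathbb Z$; a closed $1$-form is exact precisely when its single period $\oint_\gamma\lambda$ over a generator $\gamma$ vanishes. The decisive observation is that $\iota_{\bm u}\lambda=-\iota_{\bm u}\iota_{\bm u}\beta=0$, equivalently $(\bm u\times\bm B)\cdot\bm u=0$, so $\lambda$ annihilates the generator $\bm u$ and its integral over every orbit of $\Phi_\theta$ is zero. To finish I would argue that a generator of $H_1(Q)$ may be represented by such an orbit (for an action of ``toroidal'' type the generic orbit is isotopic to the core of the solid torus), or, when the orbits are contractible, by a loop lying in the fixed-point set of $\Phi_\theta$, along which $\bm u=0$ and hence $\lambda=0$; in either case the period vanishes. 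Then $\lambda=\mathbf d\psi$ for a single-valued function $\psi$ obtained by integrating $\lambda$ along paths from a fixed base point, and $\psi$ inherits the $C^\infty$ regularity of $\lambda$.

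I expect the main obstacle to be exactly this last topological step. The closedness of $\lambda$ is a routine Cartan computation using the three hypotheses $\nabla\cdot\bm B=0$, $\nabla\cdot\bm u=0$, and $[\bm u,\bm B]=0$, but concluding exactness requires controlling the one period of $\lambda$, and therefore understanding how the orbits (and possible fixed-point set) of the circle-action sit inside $H_1(Q)$. The clean part is the identity $\iota_{\bm u}\lambda=0$, which kills the period along any orbit; the care lies in ensuring that a homology generator is realized by a curve on which $\lambda$ restricts to zero.
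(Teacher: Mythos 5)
Your route is genuinely different from the paper's, and its first two steps are sound. The reduction of the averaged identity to $\bm{u}\times\bm{B}=\nabla\psi$ is exactly right, and the Cartan computation showing that $\lambda=-\iota_{\bm u}\beta$ is closed (using $\mathbf{d}\beta=0$, $L_{\bm u}\,d^3\bm{x}=0$, and $\iota_{[\bm u,\bm B]}\,d^3\bm{x}=0$) is correct. Your treatment of the case of non-contractible orbits is also fine, and in fact simpler than you make it: all orbits are freely homotopic to one another (connect two points by a path and drag the orbit along it), hence they represent a common class $k[\gamma_0]\in H_1(Q;\mathbb{Z})\cong\mathbb{Z}$; if $k\neq 0$, then $0=\oint_{\ell}\lambda=k\oint_{\gamma_0}\lambda$ by linearity of the period homomorphism, so the single period vanishes --- no isotopy of an orbit to the core is needed.

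The genuine gap is the contractible-orbit case ($k=0$, e.g.\ a purely poloidal rotation). There you assert that a generator of $H_1(Q)$ can be represented by a loop in the fixed-point set, but two nontrivial claims are hidden in this: (i) that a nontrivial action with null-homotopic orbits has fixed points at all (one must rule out fixed-point-free actions with contractible orbits, e.g.\ by noting such an action would exhibit $Q$ as a Seifert fibered solid torus, whose fibers are never null-homotopic), and (ii) that some component of the fixed set --- a priori just a disjoint union of embedded invariant circles --- actually generates $\pi_1(Q)$ rather than being null-homotopic. Both statements are true for the solid torus, but they rest on the classification of smooth circle actions on $D^2\times S^1$ (every effective action is conjugate to a linear one, in the style of Orlik--Raymond), which you neither cite nor prove; your proposal flags this as the main obstacle but does not close it. This is precisely the difficulty the paper's proof avoids by working on the $H^2$ side rather than the $H^1$ side: since $H^2_{\mathrm{dR}}(Q)=0$, one writes $\beta=\mathbf{d}\alpha$, averages to obtain an invariant potential $\overline{\alpha}=\fint\Phi_\theta^*\alpha\,d\theta$ satisfying $\mathbf{d}\overline{\alpha}=\beta$ (using $\Phi_\theta^*\beta=\beta$), and sets $\psi=\iota_{\bm u}\overline{\alpha}$; Cartan's formula then gives $\mathbf{d}\psi=L_{\bm u}\overline{\alpha}-\iota_{\bm u}\mathbf{d}\overline{\alpha}=-\iota_{\bm u}\beta=\lambda$, so exactness comes for free, uniformly in all cases and with no topological input beyond the vanishing of the second cohomology. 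If you wish to keep your framework, the cleanest repair is to replace your case analysis by this averaged-potential construction of an explicit primitive.
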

\begin{proof}
Set $\beta = \iota_{\bm{B}}d^3\bm{x}$. Because $\bm{B}$ is divergence-free, the $2$-form $\beta$ is closed. In fact, because the second de Rham cohomology group of the solid torus is trivial, there must be a $1$-form $\alpha$ such that $\beta = \mathbf{d}\alpha$. (Note that any physical $\bm{B}$ must admit a vector potential in any region devoid of magnetic monopoles. However, on purely mathematical grounds $\nabla\cdot\bm{B} = 0$ in a region with non-trivial second cohomology does not always imply $\bm{B} = \nabla\times\bm{A}$. Therefore it would be more physical to replace the equilibrium condition $\nabla\cdot \bm{B} = 0$ with $\bm{B} = \nabla\times\bm{A}$, but we digress.) In light of the fact that $\Phi_\theta$ is volume-preserving and $\bm{B}$-preserving (because $[\bm{u},\bm{B}] =L_{\bm{u}}\bm{B} = 0$), $\beta$ therefore satisfies
\begin{align}
\beta = \fint \Phi_\theta^*\beta\,d\theta = \mathbf{d}\fint \Phi_\theta^*\alpha\,d\theta = \mathbf{d}\overline{\alpha},\label{eq:av_b}
\end{align}
where $\overline{\alpha}= \fint \Phi_\theta^*\alpha\,d\theta$ is the $S^1$-average of $\alpha$. 

Set $\psi = \iota_{\bm{u}}\overline{\alpha}$. By Eq.\,\eqref{eq:av_b}, the exterior derivative of $\psi$ is given by
\begin{align}
\mathbf{d}\psi= L_{\bm{u}} \overline{\alpha} -\iota_{\bm{u}}\mathbf{d}\overline{\alpha} =- \iota_{\bm{u}}\beta.
\end{align}
By converting this expression into vector calculus notation with respect to either the usual flat metric $g$ or the averaged metric $\overline{g}$, we obtain the desired formulas.

\end{proof}

\section{The main result}
In this Section we will establish our main result, namely that all nondegenerate solutions of the ideal equilibrium equations must satisfy a generalization of the Grad-Shafranov equation. To prove this result, we will first use Hamada coordinates to demonstrate that all solutions possess a hidden volume-preserving symmetry. We will also provide a coordinate-free proof.

\begin{definition}
{ A pair $(\bm{B},\mathsf{p})$, where $\bm{B}$ is a $C^\infty$ vector field on $Q$ and $\mathsf{p}$ is a $C^\infty$ function on $Q$, is a \emph{nondegenerate solution} of the MHD equilibrium equations on $Q$ if
\begin{gather}
(\nabla\times\bm{B})\times\bm{B} = \nabla\mathsf{p}\\
\nabla\cdot\bm{B} = 0,
\end{gather}
the normal component $\bm{B}\cdot\bm{n}=0$ on $\partial Q$, the zero locus $\ell$ of $\nabla\mathsf{p}$ is diffeomorphic to $S^1$, and the Hessian $\nabla^2\mathsf{p}$ along $\ell$ is sign-definite normal to $\ell$. }
\end{definition}
\begin{remark}
{ The requirement that $\nabla^2\mathsf{p}$ be sign-definite normal to $\ell$ is equivalent to the following condition. If $\bm{w}$ is any nowhere-vanishing vector field that is never tangent to $\ell$ then $\bm{w}\cdot \nabla^2\mathsf{p}\cdot \bm{w}$ is nowhere-vanishing along $\ell$.}
\end{remark}

\begin{theorem}\label{hidden_symmetry_thm}
Given any nondegenerate solution $(\bm{B},\mathsf{p})$ of the MHD equilibrium equations on $Q$, there is a volume-preserving circle-action $\Phi_\theta$ with { nowhere-vanishing} infinitesimal generator $\bm{u}$ that satisfies $[\bm{u},\bm{B}] =[\bm{u},\bm{J}]= 0$,  where $\bm{J} = \nabla\times\bm{B}$.
\end{theorem}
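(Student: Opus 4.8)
The plan is to build the symmetry generator $\bm{u}$ explicitly out of Hamada coordinates adapted to the solution, exploiting the fact that force balance forces $\bm{B}$ and $\bm{J}$ to be commuting, divergence-free, flux-surface-tangent fields. First I would extract the integrable structure directly from Eqs.\,\eqref{force_balance}-\eqref{divB}: dotting force balance with $\bm{B}$ and with $\bm{J}$ gives $\bm{B}\cdot\nabla\mathsf{p}=\bm{J}\cdot\nabla\mathsf{p}=0$, so both fields are tangent to the level sets of $\mathsf{p}$, while taking the curl of $\bm{J}\times\bm{B}=\nabla\mathsf{p}$ and using $\nabla\cdot\bm{B}=\nabla\cdot\bm{J}=0$ collapses the standard vector identity for $\nabla\times(\bm{J}\times\bm{B})$ to $[\bm{B},\bm{J}]=0$. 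Thus on every regular level set $\bm{B}$ and $\bm{J}$ are commuting tangent vector fields. Note also that off $\ell$ one has $\bm{B}\neq0$, since otherwise $\nabla\mathsf{p}=\bm{J}\times\bm{B}$ would vanish there.

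Next I would use the nondegeneracy hypotheses to organize these level sets into a smooth nested family of tori. Away from $\ell$ the condition $\nabla\mathsf{p}\neq0$ makes the level sets smooth surfaces, the sign-definite normal Hessian along the circle $\ell$ forces the nearby level sets to be small tubes encircling $\ell$, and tangency of the nonvanishing field $\bm{B}$ pins their Euler characteristic to zero, i.e.\ they are $2$-tori. On each such torus the independent commuting pair $(\bm{B},\bm{J})$ generates a locally free $\mathbb{R}^2$-action whose isotropy is a rank-two lattice $\Lambda$, so the torus is an affine quotient $\mathbb{R}^2/\Lambda$; choosing a basis of $\Lambda$ smoothly across the family, together with the flux label $\psi$, yields angle coordinates $(\psi,\theta,\zeta)$ in which $\bm{B}$ and $\bm{J}$ have flux-function contravariant components $B^\theta,B^\zeta,J^\theta,J^\zeta$. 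Imposing in addition that the coordinate Jacobian $\mathcal{J}$ be constant (the defining Hamada normalization) is what will convert the toroidal coordinate field into a volume-preserving one.

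With Hamada coordinates in hand I would set $\bm{u}=\partial_\zeta$, taking $\zeta$ to be the toroidal angle that limits to rotation along $\ell$. Because $B^\theta,B^\zeta,J^\theta,J^\zeta$ depend only on $\psi$, one reads off $[\partial_\zeta,\bm{B}]=[\partial_\zeta,\bm{J}]=0$ directly; constancy of the Jacobian gives $\nabla\cdot\bm{u}=\mathcal{J}^{-1}\partial_\zeta\mathcal{J}=0$, so by the Remark following Definition\,\ref{def:infinitesimal_generator} the flow of $\bm{u}$ is volume-preserving; and periodicity of $\zeta$ makes the flow a genuine $2\pi$-periodic circle-action. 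Finally I would verify that $\partial_\zeta$ extends smoothly across $\ell$ as a nowhere-vanishing field: the poloidal angle $\theta$ degenerates on the axis like the polar angle at the origin, but the toroidal angle $\zeta$ does not, so $\partial_\zeta$ approaches the nonvanishing tangent of $\ell$ there.

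The main obstacle is exactly this global-plus-axis coordinate construction rather than the algebra. Building Hamada coordinates with constant Jacobian requires solving magnetic differential equations of the form $\bm{B}\cdot\nabla f=g$ consistently across the whole torus family, whose solvability is delicate at rational (resonant) surfaces; here the commutativity $[\bm{B},\bm{J}]=0$ and independence of the pair are precisely what bypass the single-field-line resonance problem, by supplying a full toroidal $\mathbb{R}^2$-action on each surface. Ensuring that the chosen toroidal angle field is $C^\infty$ and nonvanishing at the magnetic axis, and not merely on the torus interiors, is the other place where care is needed, and I expect the coordinate-free proof the authors promise is designed precisely to sidestep these coordinate singularities.
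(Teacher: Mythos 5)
Your proposal is essentially the paper's own first proof. The paper likewise works in Hamada coordinates $(V,\zeta_1,\zeta_2)$, in which $\bm{B}$ and $\bm{J}$ have flux-function components, and takes $\bm{u}=c_1(V)\bm{B}+c_2(V)\bm{J}$ with the flux functions chosen so that $\bm{u}=n\,\partial_{\zeta_1}+m\,\partial_{\zeta_2}$ for integers $(n,m)$; your $\bm{u}=\partial_\zeta$ is the special case $(n,m)=(1,0)$, and your lattice/Liouville--Arnold construction of the coordinates themselves is the same argument the paper runs in cohomological language in its coordinate-free second proof. The one place where your argument claims more than it establishes is the final step: the smooth, nowhere-vanishing extension of $\partial_\zeta$ across the magnetic axis $\ell$. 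Your polar-angle heuristic is not a proof --- near-axis expansions of three-dimensional equilibria can develop weak singularities on $\ell$ (Ref.\,\onlinecite{Weitzner_2016}) --- and the paper concedes exactly this point, stating that Hamada coordinates necessarily have a coordinate singularity on $\ell$ and deferring on-axis regularity and non-vanishing of $\bm{u}$ to the forthcoming Ref.\,\onlinecite{BMD_2020}. Note also that, contrary to your closing expectation, the paper's coordinate-free proof does \emph{not} sidestep this issue: it explicitly lists the same two shortcomings (regularity of the circle action on $\ell$, and $\bm{u}\neq 0$ there) and relies on the same forthcoming work. So your proposal matches the paper's proof both in substance and in its acknowledged gap.
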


\begin{proof}[proof using Hamada coordinates]
Because the equilibrium is non-degenerate the level sets of $\mathsf{p}$ foliate $Q$ by toroidal surfaces. Use $V$ to denote flux-surface volume. Let $\iota(V)$ and $\eta(V)$ denote the rotational transforms for $\bm{B}$ and $\bm{J} $, respectively. In Hamada coordinates\cite{Hamada_1962} $(V,\zeta_1,\zeta_2)$, $\bm{B}$ and $\bm{J}$ have the form
\begin{align}
\bm{B} =& B^1(V)\,\partial_{\zeta_1} + B^2(V)\,\partial_{\zeta_2}\\
\bm{J} = & J^1(V)\,\partial_{\zeta_1} + J^2(V)\,\partial_{\zeta_2},
\end{align}
where $B^2/B^1 = \iota(V)$, $J^2/J^1 = \eta(V)$. Consider a linear combination of $\bm{B}$ and $\bm{J}$ of the form
\begin{align}
\bm{u} = & c_1(V)\,\bm{B} + c_2(V)\bm{J}\nonumber\\
& = (c_1 B^1 + c_2 J^1)\,\partial_{\zeta_1} + (c_1 B^2 + c_2 J^2)\,\partial_{\zeta_2}.
\end{align}
If $(c_1,c_2)$ is defined to be the unique solution of
\begin{align}
c_1 B^1 + c_2 J^1 =& n\\
c_1 B^2 + c_2 J^2 = & m,
\end{align}
where $n,m\in\mathbb{Z}$, then $\bm{u}$ is the infinitesimal generator of a circle-action. Because $c_1,c_2$ are flux functions, $\nabla\cdot\bm{u} = 0$, which implies that the circle-action is volume preserving. Moreover,
\begin{align}
L_{\bm{u}}\bm{B} = -L_{\bm{B}}(c_1\bm{B} + c_2\bm{J}) = 0,
\end{align}
which shows $[\bm{u},\bm{B}] =0$. Computing $L_{\bm{u}}\bm{J}$ in the same manner shows $[\bm{u},\bm{J}] = 0.$

{ This argument is enough to establish the existence of the desired circle action away from the zero locus of $\nabla\mathsf{p}$. The crucial issue is that Hamada coordinates necessarily exhibit a coordinate singularity along that set. We discuss this issue further at the end of the following coordinate-free alternative proof. }

\end{proof}

\begin{proof}[coordinate-free proof]
Let $\mathsf{p}_{\text{pb}}$ denote the (constant) value of pressure on $\partial Q$ and $\mathsf{p}_{\text{axis}}$ the on-axis pressure. { (Here, ``on-axis" means on the zero locus of $\nabla\mathsf{p}$.)} For concreteness, assume $\mathsf{p}_{\text{pb}} < \mathsf{p}_{\text{axis}}$. (The argument is essentially the same with the opposite ordering.) Fix an arbitrary $p\in(\mathsf{p}_{\text{pb}}, \mathsf{p}_{\text{axis}})$. The level set $S_p = \{\bm{x}\in Q\mid \mathsf{p}(\bm{x}) = p\}$ must be a submanifold whose connected components are diffeomorphic to $2$-tori. In fact $S_p$ must be connected, for if it had two components $S_1,S_2$ then, without loss of generality, we may assume $S_2$ is contained in the volume enclosed by $S_1$, which would imply that there is a $\nabla \mathsf{p}$-line that intersects $S_p$ in two distinct points. It follows that the $S_p$ are nested toroidal surfaces that degenerate to a circle when $p = \mathsf{p}_{\text{axis}}$. We may therefore introduce smooth families of parameterized closed curves $\gamma^T_p,\gamma^P_p$ such that $\gamma^T_p,\gamma^P_p$ are generators for { the fundamental group} $\pi_1(S_p)$ when $p\in[\mathsf{p}_{\text{pb}}, \mathsf{p}_{\text{axis}})$, and when $p = \mathsf{p}_{\text{axis}}$ the curve $\gamma^T_p$ generates $\pi_1(S_{\mathsf{p}_{\text{axis}}})$ and $\gamma^P_p$ is constant.

Fix $p\in(\mathsf{p}_{\text{pb}}, \mathsf{p}_{\text{axis}})$, as before. The volume form $\Omega$ on $Q$ induces a $2$-form $\mu_p$ on $S_p$ such that if $\mu$ is any $2$-form on $Q$ with $\Omega = dV\wedge \mu/(2\pi)^2$, with $V$ the pressure surface volume, then $\iota_{S_p}^*\mu =\mu_p$.
Because $\bm{J}$ and $\bm{B}$ are divergence-free, $L_{\bm{B}_p}\mu_p = L_{\bm{J}_p}\mu_p =0$, where $\bm{B}_p,\bm{J}_p$ are $\bm{B}$ and $\bm{J}$ restricted to $S_p$. Therefore the $1$-forms $\iota_{\bm{B}_p}\mu_p$, $\iota_{\bm{J}_p}\mu_p$ are closed and determine De Rham cohomology classes $[\iota_{\bm{B}_p}\mu_p], [\iota_{\bm{J}_p}\mu_p]\in H^2_{\text{dR}}(S_p)$. Because the curves $\gamma^T_p,\gamma^P_p$ induce a natural basis for $H^2_{\text{dR}}(S_p)\approx \mathbb{R}^2$, these cohomology classes may be regarded as tuples $[\iota_{\bm{B}_p}\mu_p] = (b^T_p,b^P_p) = b(p)$ and $[\iota_{\bm{J}_p}\mu_p] = (j^T_p,j^P_p) = j(p)$. Because $\bm{B}$ and $\bm{J}$ are linearly-independent and simultaneously conjugate to linear flows { by the Liouville-Arnold theorem}, the classes $b(p)$ and $j(p)$ must be linearly independent as well. %(Except on axis, where $b$ and $j$ are ill-defined.)

Let $c_b,c_j$ be smooth, as-yet undetermined functions of pressure, and define the divergence-free field $\bm{u} = c_b\,\bm{B} + c_j\,\bm{J}$. By linearity, the cohomology class $[\iota_{\bm{u}_p}\mu_p] = (U^T_p,U^P_p) = U(p)$ is given by $U(p) = c_b(p)\,b(p)+c_j(p)\,j(b)$. If $U(p) = (n,m)$, for integers $(n,m)$ then $\bm{u}$ will generate a volume-preserving circle-action that commutes with both $\bm{B}$ and $\bm{J}$. Because $(b(p),j(p))$ is a basis for $\mathbb{R}^2$, we can choose $c_b,c_j$ to be the coefficients of $(n,m)$ with respect to that basis for any $(n,m)$. Therefore we obtain a distinct circle-action of the desired type for each $(n,m)\in\mathbb{Z}^2$. { Note that because the construction of Hamada coordinates may be seen as the same application of the Liouville-Arnold theorem as above, the circle actions constructed in this alternative proof are the same as those found earlier using Hamada coordinates.}

{ The preceding argument has two shortcomings (1) it does not address the regularity of the circle action on the magnetic axis, and (2) it does not show that the generator $\bm{u}$ is nowhere vanishing on the magnetic axis. However, a forthcoming publication\cite{BMD_2020} proves that, under the hypotheses of this Theorem, the circle action is in fact $C^\infty$ everywhere for any $(n,m)$ and that $(n,m)$ may always be chosen to avoid a purely-poloidal rotation. In particular, in a neighborhood of the magnetic axis there exist smooth coordinates $(x,y,\phi)\in D^2\times S^1$ such that rotations in the $(x,y)$-plane and translations along $\phi$ are volume-preserving symmetries for $\bm{B}$ and $\bm{J}$.  We remark that on-axis regularity of three-dimensional MHD equilibria appears to be a delicate subject. Weitzner has shown in Ref.\,\onlinecite{Weitzner_2016} that while near-axis expansions of MHD equilibria may be carried out to all orders, such expansions can generate weak singularities on the magnetic axis. It is currently unclear whether such singularities can be avoided to all orders.}

% Depending on the limiting behavior of the basis $(b(p),j(p))$ as $p\rightarrow \mathsf{p}_{\text{axis}}$, $\Phi_\theta$ may be less regular on-axis than away from it. This is consistent with the observations of Weitzner in Ref.\,\onlinecite{Weitzner_2016} on the regularity of near-axis expansions of three-dimensional equilibria.

% The rotational transforms for $\bm{B}$ and $\bm{J}$ are given by $\iota_{\bm{B}}(p) = -b_p^T/b_p^P $ and $\iota_{\bm{J}}(p) = -j^T_p/j^P_p$. If $c_1,c_2$ are smooth flux functions then the rotational transform of the divergence-free field $\bm{u} = c_1 \bm{B} + c_2\bm{J}$ is given by
%\begin{align}
%\iota_{\bm{u}}(p) = -\frac{c_1\,b_p^T + c_2\,j_p^T}{c_1\,b_p^P + c_2\,j_p^P}.
%\end{align}

\end{proof}

\begin{remark}
{ 
The preceding proof actually shows a stronger result. Namely, there exists a family of volume-preserving circle actions parameterized by pairs of integers $(n,m)\neq (0,0)$ with infinitesimal generators that are non-zero almost everywhere and that commute with $\bm{B}$ and $\bm{J}$. The infinitesimal generator corresponding to a purely poloidal rotation would necessarily be zero along the magnetic axis. A consequence of Lemma \ref{lemma:stream_function} is that the stream function associated with each of these alternative circle actions will be an integer combination of the toroidal and poloidal flux functions.}
\end{remark}

Presence of hidden symmetry is significant because of the following result, which generalizes the usual derivation of the Grad-Shafranov equation. 

\begin{theorem}[generalized Grad-Shafranov equation]\label{theorem:GS_equation}
Let $(\bm{B},\mathsf{p})$ be any nondegenerate solution of the MHD equilibrium equations on $Q$. If $\Phi_\theta$ is a volume-preserving circle-action whose infinitesimal generator $\bm{u}$ { is nowhere vanishing and} satisfies $[\bm{u},\bm{B}] = [\bm{u},\bm{J}] = 0$ then $\bm{B}$ may be written
\begin{align}
\bm{B} = C(\psi)\,\frac{\bm{u}}{R^2} + \rho_{\overline{g}}\frac{\overline{\nabla}\psi\,\overline{\times}\,\bm{u}}{R^2},\label{stream_rep}
\end{align}
where $C:\mathbb{R}\rightarrow\mathbb{R}$ is a smooth function of a single variable that satisfies $C(\psi) = \bm{u}\overline{\cdot}\bm{B}$, and $\psi:Q\rightarrow\mathbb{R}$ is a $C^\infty$ function that satisfies \emph{the generalized Grad-Shafranov equation},
\begin{align}
&-\overline{\nabla}\,\overline{\cdot}\,(R^{-2}\rho_{\overline{g}}\,\overline{\nabla}\psi) +C(\psi)\,\,\frac{\bm{u}}{R^2}\,\overline{\cdot}\,\overline{\nabla}\,\overline{\times}\,\bigg(\frac{\bm{u}}{R^2}\bigg)\nonumber\\
&=\frac{ p^\prime(\psi)+R^{-2}\,C(\psi)\,C^\prime(\psi)}{\rho_{\overline{g}}},\label{ggs_thm2}
\end{align}
with $\psi = \text{const.}$ on $\partial Q$.
Recall that $R^2 = \bm{u}\,\overline{\cdot}\,\bm{u}$, $\rho_{\overline{g}}$ is the $\overline{g}$-density, and $\bm{J} = \nabla\times\bm{B}$.
\end{theorem}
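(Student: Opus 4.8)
The plan is to reproduce the classical Grad--Shafranov derivation essentially verbatim, but carried out inside the averaged vector calculus of Section \ref{def:circle_symmetry}; the whole point is that Euclidean force balance is equivalent to an identical-looking balance law written with barred operators, after which all the familiar vector identities (product rules, BAC--CAB, and $\bm{a}\cdot\nabla\times\bm{m}=\nabla\cdot(\bm{m}\times\bm{a})+\bm{m}\cdot\nabla\times\bm{a}$) may be used with bars. First I would establish the representation \eqref{stream_rep}. Lemma \ref{lemma:stream_function} applies directly and gives a $C^\infty$ stream function $\psi$ with $\bm{u}\times\bm{B}=\nabla\psi$ and $\bm{u}\,\overline{\times}\,\bm{B}=\rho_{\overline{g}}\,\overline{\nabla}\psi$. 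Setting $C:=\bm{u}\,\overline{\cdot}\,\bm{B}$ and applying the purely algebraic decomposition $\bm{w}=R^{-2}(\bm{u}\,\overline{\cdot}\,\bm{w})\,\bm{u}+R^{-2}(\bm{u}\,\overline{\times}\,\bm{w})\,\overline{\times}\,\bm{u}$ (a consequence of averaged BAC--CAB) to $\bm{w}=\bm{B}$, together with $\bm{u}\,\overline{\cdot}\,\overline{\nabla}\psi=\bm{u}\cdot\nabla\psi=0$, recovers \eqref{stream_rep} at once, with $C=\bm{u}\,\overline{\cdot}\,\bm{B}$.

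Next I would prove the flux-function facts. Since $\bm{u}=c_b\bm{B}+c_j\bm{J}$ from Theorem \ref{hidden_symmetry_thm}, the fields $\bm{u}$ and $\bm{B}$ span each pressure torus, so $\nabla\psi=\bm{u}\times\bm{B}$ is normal to it; hence $\psi$ and $\mathsf{p}$ share level sets and $\mathsf{p}=p(\psi)$. Force balance then reads $(\bm{J}-p'(\psi)\,\bm{u})\times\bm{B}=0$, so $\bm{J}=p'(\psi)\,\bm{u}+\lambda\,\bm{B}$ for some scalar $\lambda$. To identify $\lambda$ and to show $C=C(\psi)$, I would establish the key identity $\mathbf{d}(\iota_{\bm{B}}\overline{g})=\iota_{\bm{J}}\,d^3\bm{x}$, obtained by $\theta$-averaging the Euclidean relation $\mathbf{d}(\iota_{\bm{B}}g)=\iota_{\bm{J}}\,d^3\bm{x}$ and using $\Phi_\theta$-invariance of $\bm{B},\bm{J},d^3\bm{x}$ together with $\iota_{\bm{B}}\overline{g}=\fint\Phi_\theta^*(\iota_{\bm{B}}g)\,d\theta$. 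Contracting with $\bm{u}$ and using $L_{\bm{u}}(\iota_{\bm{B}}\overline{g})=0$ gives $-\mathbf{d}C=\iota_{\bm{u}}\iota_{\bm{J}}d^3\bm{x}$, i.e.\ $\nabla C=\bm{u}\times\bm{J}=\lambda\,\nabla\psi$, whence $C=C(\psi)$ and $\lambda=C'(\psi)$.

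Finally I would derive the equation. The same identity states $\bm{J}=\rho_{\overline{g}}\,\overline{\nabla}\,\overline{\times}\,\bm{B}$, so $\overline{\nabla}\,\overline{\times}\,\bm{B}=\rho_{\overline{g}}^{-1}(p'\,\bm{u}+C'\,\bm{B})$, and dotting with $\bm{u}/R^2$ reproduces the right-hand side $\rho_{\overline{g}}^{-1}(p'+R^{-2}CC')$ of \eqref{ggs_thm2}. For the left-hand side I compute $\tfrac{\bm{u}}{R^2}\,\overline{\cdot}\,\overline{\nabla}\,\overline{\times}\,\bm{B}$ from \eqref{stream_rep}: the $C\bm{u}/R^2$ term contributes $C\,\tfrac{\bm{u}}{R^2}\,\overline{\cdot}\,\overline{\nabla}\,\overline{\times}(\bm{u}/R^2)$ once the $\overline{\nabla}C\,\overline{\times}\,\bm{a}$ cross term drops by a repeated-vector triple product; for the perpendicular part $\bm{m}=\rho_{\overline{g}}\,\overline{\nabla}\psi\,\overline{\times}\,\bm{a}$ with $\bm{a}=\bm{u}/R^2$, the identity $\bm{a}\,\overline{\cdot}\,\overline{\nabla}\,\overline{\times}\,\bm{m}=\overline{\nabla}\,\overline{\cdot}\,(\bm{m}\,\overline{\times}\,\bm{a})+\bm{m}\,\overline{\cdot}\,\overline{\nabla}\,\overline{\times}\,\bm{a}$ combined with BAC--CAB and $\overline{\nabla}\psi\,\overline{\cdot}\,\bm{a}=0$ reduces $\bm{m}\,\overline{\times}\,\bm{a}$ to $-R^{-2}\rho_{\overline{g}}\,\overline{\nabla}\psi$, while $\bm{m}\,\overline{\cdot}\,\overline{\nabla}\,\overline{\times}\,\bm{a}=0$ because Eq.\ \eqref{force_free} forces $\overline{\nabla}\,\overline{\times}(\bm{u}/R^2)\parallel\bm{u}/R^2$. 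Equating the two sides is precisely \eqref{ggs_thm2}.

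I expect the main obstacle to be the translation identity $\mathbf{d}(\iota_{\bm{B}}\overline{g})=\iota_{\bm{J}}\,d^3\bm{x}$, equivalently $\bm{J}=\rho_{\overline{g}}\,\overline{\nabla}\,\overline{\times}\,\bm{B}$: some care is needed because $\bm{B}$ is divergence-free for the \emph{flat} metric but not for $\overline{g}$, so the passage between the Euclidean curl and the averaged curl is not a formality and must be justified by the averaging argument above. A secondary difficulty is the smoothness of $C(\psi)$ and the nonvanishing of $R^2$ near the magnetic axis, which I would control using the nondegeneracy hypothesis together with the cited on-axis regularity result.
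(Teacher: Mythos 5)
Your proposal is correct and follows essentially the same route as the paper's proof: Lemma~\ref{lemma:stream_function} plus the averaged BAC--CAB decomposition give Eq.~\eqref{stream_rep}; your ``key identity'' $\mathbf{d}(\iota_{\bm{B}}\overline{g})=\iota_{\bm{J}}\,d^3\bm{x}$ and the $\theta$-averaged force balance are exactly the paper's flux relations \eqref{Cflux}--\eqref{pflux}, obtained by the same pullback-and-average argument; and your final curl computation is the paper's averaged-divergence step rearranged through the same product rule and structure equation \eqref{force_free}. The one blemish is your appeal to $\bm{u}=c_b\bm{B}+c_j\bm{J}$ ``from Theorem~\ref{hidden_symmetry_thm}'' -- that decomposition appears only in the \emph{proof} of that theorem and is not a hypothesis of Theorem~\ref{theorem:GS_equation} -- but it is also unnecessary, since $\bm{u}\cdot\nabla\mathsf{p}=0$ (from Lie-dragging force balance along $\bm{u}$) together with nondegeneracy already yields $\mathsf{p}=p(\psi)$, which is how the paper implicitly proceeds.
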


\begin{proof}
Because $\bm{B}$ is a divergence-free vector field on $Q$ satisfying $[\bm{u},\bm{B}]$, with $\bm{u}$ the infinitesimal generator of a volume-preserving circle-action, Lemma \ref{lemma:stream_function} implies that there is a stream function $\psi$ such that
\begin{align}
\bm{u}\,\overline{\times}\,\bm{B} = \rho_{\overline{g}}\,\overline{\nabla}\psi. \label{fourfour}
\end{align}
Therefore
\begin{align}
(\bm{u}\,\overline{\times}\,\bm{B})\,\overline{\times}\,\bm{u} &= R^2\,\bm{B} - (\bm{u}\,\overline{\cdot}\,\bm{B})\,\bm{u} = \rho_{\overline{g}}\,\overline{\nabla}\psi\,\overline{\times}\bm{u}\nonumber\\
\Rightarrow  \bm{B} &= (\bm{u}\overline{\cdot}\bm{B})\,\frac{\bm{u}}{R^2} + \rho_{\overline{g}}\frac{\overline{\nabla}\psi\,\overline{\times}\bm{u}}{R^2}.\label{thm_sr}
\end{align}
The proof will therefore be complete if we establish (A) $\bm{u}\overline{\cdot}\bm{B}=C(\psi)$ for some $C$, and (B) that $\psi$ satisfies the generalized Grad-Shafranov equation.
\\ \\
\noindent (A): Because the equilibrium is nondegenerate, we may prove that $\bm{u}\overline{\cdot}\bm{B}=C(\psi)$ by showing $L_{\bm{B}} C = L_{\bm{J}}C = 0$. By the Leibniz property for the Lie derivative,
\begin{align}
L_{\bm{B}}(\bm{u}\overline{\cdot}\bm{B})&= L_{\bm{B}} ([\iota_{\bm{B}}\overline{g}](\bm{u}))\nonumber\\
& =[L_{\bm{B}}(\iota_{\bm{B}}\overline{g})](\bm{u}) + [\iota_{\bm{B}}\overline{g}](L_{\bm{B}}\bm{u})\nonumber\\
& = [L_{\bm{B}}(\iota_{\bm{B}}\overline{g})](\bm{u}) \\
L_{\bm{J}}(\bm{u}\overline{\cdot}\bm{B}) & = L_{\bm{J}}([\iota_{\bm{B}}\overline{g}](\bm{u}))\nonumber\\
& = [L_{\bm{J}}(\iota_{\bm{B}}\overline{g})](\bm{u}) + [\iota_{\bm{B}}\overline{g}](L_{\bm{J}}\bm{u}) \nonumber\\
&= [L_{\bm{J}}(\iota_{\bm{B}}\overline{g})](\bm{u}).
\end{align}
Therefore $L_{\bm{B}}C = L_{\bm{J}}C=0$ if and only if $ \gamma = [L_{\bm{B}}(\iota_{\bm{B}}\overline{g})](\bm{u})=0$ and $\eta =  [L_{\bm{J}}(\iota_{\bm{B}}\overline{g})](\bm{u}) = 0$. In order to show that $\gamma$ and $\eta$ are in fact zero, we first write the MHD equilibrium equation in terms of differential forms as
\begin{align} 
\iota_{\bm{B}}\mathbf{d}(\iota_{\bm{B}}g) = \mathbf{d}\mathsf{p}.
\end{align}
Next we apply the pullback $\Phi_\theta^*$ to both sides of the equation and average over the parameter $\theta$. The result is
\begin{align}
\iota_{\bm{B}}\mathbf{d}(\iota_{\bm{B}}\overline{g}) = \mathbf{d}\mathsf{p}.\label{eq:averaged_gs}
\end{align}
Now notice that, by Cartan's formula,
\begin{align}
L_{\bm{B}}(\iota_{\bm{B}}\overline{g}) &= \mathbf{d}(\mathsf{p}+\overline{g}(\bm{B},\bm{B}))\\
L_{\bm{J}}(\iota_{\bm{B}}\overline{g})& = \iota_{\bm{J}}\mathbf{d}\iota_{\bm{B}}\overline{g} + \mathbf{d}(\overline{g}(\bm{J},\bm{B})) = \mathbf{d}(\overline{g}(\bm{J},\bm{B})) ,
\end{align}
where we have used $\iota_{\bm{J}}\Omega = \mathbf{d}\iota_{\bm{J}}g = \mathbf{d}\iota_{\bm{J}}\overline{g}$, with $\Omega$ the standard Euclidean volume form on $Q$. It follows that $\gamma$ and $\eta$ are given by
\begin{align}
\gamma & = L_{\bm{u}}(\mathsf{p}+\overline{g}(\bm{B},\bm{B})) = 0\\
\eta & = L_{\bm{u}}(\overline{g}(\bm{J},\bm{B})) = 0,
\end{align}
which implies $L_{\bm{B}}C = L_{\bm{J}}C = 0$, as desired.
\\ \\
\noindent (B): To establish the generalized Grad-Shafranov equation, we begin with the following three flux relations,
\begin{align}
\bm{u}\overline{\times}\bm{J} &= \rho_{\overline{g}}\overline{\nabla}C\label{Cflux}\\
\bm{u}\overline{\times}\bm{B}& = \rho_{\overline{g}}\overline{\nabla}\psi\label{psiflux}\\
\bm{J}\overline{\times}\bm{B} &= \rho_{\overline{g}}\overline{\nabla}\mathsf{p}.\label{pflux}
\end{align}
Equation \eqref{Cflux} follows from $\iota_{\bm{u}}\iota_{\bm{J}}\Omega = \iota_{\bm{u}}\mathbf{d}\iota_{\bm{B}}\overline{g} = -\mathbf{d}(\bm{u}\overline{\cdot}\bm{B})$; Eq.\,\eqref{psiflux} is just Eq.\,\eqref{fourfour}; while Eq.\,\eqref{pflux} is just the averaged force balance equation \eqref{eq:averaged_gs} expressed in averaged vector-calculus notation.
Substituting Eq.\,\eqref{thm_sr} into Eq.\,\eqref{pflux} and using Eq.\,\eqref{Cflux} implies $\bm{J}\overline{\cdot}(\bm{u}/R^2) = C\,C^\prime/R^2 + p^\prime.$ Taking the averaged divergence of $R^{-2}$ times Eq.\,\eqref{psiflux} then gives 
\begin{align}
\overline{\nabla}\overline{\cdot}\left(R^{-2}\rho_{\overline{g}}\overline{\nabla}\psi\right) &= \bm{B}\overline{\cdot}\overline{\nabla}\overline{\times}\left(\frac{\bm{u}}{R^2}\right) -( \overline{\nabla}\overline{\times}\bm{B})\overline{\cdot}\frac{\bm{u}}{R^2}\nonumber\\
& = C \frac{\bm{u}}{R^2}\overline{\cdot}\overline{\nabla}\overline{\times}\left(\frac{\bm{u}}{R^2}\right) - \frac{\bm{J}\overline{\cdot}\bm{u}}{\rho_{\overline{g}}R^2},
\end{align}
which is equivalent to the GGS equation \eqref{ggs_thm2}.

\end{proof}

\begin{remark}
{ According to Eq.\,\eqref{stream_rep}, the magnetic field is determined entirely by the circle action $\Phi_\theta$, the single-variable function $C(\psi)$, and the function $\psi$. Therefore all properties of $\bm{B}$ may be determined from that data. In particular the rotational transform $\iota(\psi)$ may be determined using the formula
\begin{align}
\iota(\psi) = -\frac{\oint_{\gamma_T} \frac{u}{R^2}\,\overline{\cdot}\,d\bm{x} + \oint_{\gamma_T}\frac{C}{\rho_{\overline{g}}\,\overline{\nabla}\psi\,\overline{\cdot}\,\overline{\nabla}\psi}\,\frac{\bm{u}}{R^2}\,\overline{\times}\,\overline{\nabla}\psi\,\overline{\cdot}\,d\bm{x}
 }{\oint_{\gamma_P} \frac{u}{R^2}\,\overline{\cdot}\,d\bm{x} + \oint_{\gamma_P} \frac{C}{\rho_{\overline{g}}\,\overline{\nabla}\psi\,\overline{\cdot}\,\overline{\nabla}\psi}\,\frac{\bm{u}}{R^2}\,\overline{\times}\,\overline{\nabla}\psi\,\overline{\cdot}\,d\bm{x}
},
\end{align}
where $\gamma_T$ and $\gamma_P$ are any toroidal and poloidal loops contained in the surface $\psi(\bm{x}) = \psi$. (The particular choice of toroidal and poloidal loop is immaterial because the $1$-form $\vartheta = \frac{u}{R^2}\,\overline{\cdot}\,d\bm{x}+\frac{C}{\rho_{\overline{g}}\,\overline{\nabla}\psi\,\overline{\cdot}\,\overline{\nabla}\psi}\,\frac{\bm{u}}{R^2}\,\overline{\times}\,\overline{\nabla}\psi\,\overline{\cdot}\,d\bm{x}$ is closed when pulled back to a $\psi$-surface.) We may also obtain the parallel current in the averaged metric according to $\bm{J}\,\overline{\cdot}\, \bm{B} = p^\prime\,C + C^\prime\,|\bm{B}|^2$.
}
\end{remark}

In light of Theorems \ref{hidden_symmetry_thm} and \ref{theorem:GS_equation}, we may now state the following remarkable fact about arbitrary smooth solutions of the ideal MHD equilibrium equations, which is the main result of this Article.

\begin{theorem}
Let $(\bm{B},\mathsf{p})$ be any nondegenerate solution of the MHD equilibrium equations on $Q$. There exists a volume-preserving circle-action with infinitesimal generator $\bm{u}$, a $C^\infty$ function $\psi:Q\rightarrow\mathbb{R}$, and a smooth single-variable function $C:\mathbb{R}\rightarrow \mathbb{R}$ such that the magnetic field $\bm{B}$ may be written 
\begin{align}
\bm{B} = C(\psi)\,\frac{\bm{u}}{R^2} + \rho_{\overline{g}}\frac{\overline{\nabla}\psi\,\overline{\times}\,\bm{u}}{R^2},\label{stream_rep_final}
\end{align}
$C(\psi) = \bm{u}\,\overline{\cdot}\,\bm{B}$, and $\psi$ solves the generalized Grad-Shafranov (GGS) equation,
\begin{align}
&-\overline{\nabla}\,\overline{\cdot}\,(R^{-2}\rho_{\overline{g}}\,\overline{\nabla}\psi) +C(\psi)\,\,\frac{\bm{u}}{R^2}\,\overline{\cdot}\,\overline{\nabla}\,\overline{\times}\,\bigg(\frac{\bm{u}}{R^2}\bigg)\nonumber\\
&=\frac{ p^\prime(\psi)+R^{-2}\,C(\psi)\,C^\prime(\psi)}{\rho_{\overline{g}}},\label{ggs_eqn}
\end{align}
with $\psi = \text{const.}$ on $\partial Q$.
\end{theorem}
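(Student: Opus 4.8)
The plan is to recognize that this theorem is a direct synthesis of Theorems~\ref{hidden_symmetry_thm} and~\ref{theorem:GS_equation}, so that essentially no new work is required beyond invoking them in sequence. First I would apply Theorem~\ref{hidden_symmetry_thm} to the given nondegenerate solution $(\bm{B},\mathsf{p})$. This produces a volume-preserving circle-action $\Phi_\theta$ whose infinitesimal generator $\bm{u}$ is nowhere vanishing and satisfies $[\bm{u},\bm{B}] = [\bm{u},\bm{J}] = 0$ with $\bm{J} = \nabla\times\bm{B}$. At this stage all the geometric input---the nested toroidal flux surfaces, the Liouville--Arnold construction of commuting generators, and the choice of integer pair $(n,m)$ avoiding a purely poloidal rotation---has already been supplied by that theorem.

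Next I would feed this $\Phi_\theta$ directly into Theorem~\ref{theorem:GS_equation}. Its hypotheses are exactly that $\Phi_\theta$ is a volume-preserving circle-action whose generator is nowhere vanishing and commutes with both $\bm{B}$ and $\bm{J}$, which is precisely the output of the previous step. Theorem~\ref{theorem:GS_equation} then delivers, verbatim, the stream-function representation~\eqref{stream_rep} together with the existence of a smooth single-variable function $C$ obeying $C(\psi) = \bm{u}\,\overline{\cdot}\,\bm{B}$, and the statement that $\psi$ solves the generalized Grad--Shafranov equation~\eqref{ggs_thm2} subject to $\psi = \text{const.}$ on $\partial Q$. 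Since Eqs.~\eqref{stream_rep_final} and~\eqref{ggs_eqn} coincide symbol-for-symbol with Eqs.~\eqref{stream_rep} and~\eqref{ggs_thm2}, this completes the argument.

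The final bookkeeping step is to confirm that the existential quantifiers line up: the theorem asserts the mere existence of some triple $(\bm{u},\psi,C)$, and the composite construction produces exactly such a triple. I do not expect any genuine obstacle in this synthesis itself; the one delicate point---and the place where all the real difficulty is concentrated---is the nowhere-vanishing and $C^\infty$ regularity of $\bm{u}$ on the magnetic axis, which Theorem~\ref{hidden_symmetry_thm} imports from the forthcoming companion result~\cite{BMD_2020}. If one wished to make the present theorem fully self-contained, the hard part would be precisely that on-axis analysis, since the Hamada-coordinate construction degenerates along $\ell$ and near-axis expansions are known to admit weak singularities. Everything downstream of securing a smooth, nowhere-vanishing generator is then routine.
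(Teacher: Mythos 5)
Your proposal is correct and matches the paper exactly: the paper states this theorem as an immediate consequence of Theorems~\ref{hidden_symmetry_thm} and~\ref{theorem:GS_equation} (``In light of Theorems \ref{hidden_symmetry_thm} and \ref{theorem:GS_equation}, we may now state the following\ldots''), with no additional argument, which is precisely your two-step synthesis. Your observation that the genuine difficulty is concentrated in the on-axis regularity and non-vanishing of $\bm{u}$, deferred to Ref.~\onlinecite{BMD_2020}, is also exactly how the paper handles that point.
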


\begin{remark}
{ Because $C(\psi) = \bm{u}\,\overline{\cdot}\,\bm{B}$, the function $C$ has the following physical interpretation. Let $\gamma(\lambda)$ be an integral curve for the infinitesimal generator $\bm{u}$, i.e. a parameterized curve that satisfies $\partial_\lambda\gamma = \bm{u}(\gamma(\lambda))$. Let $S_\gamma$ be a ribbon-like surface spanning from the magnetic axis $\ell$ to the integral curve $\gamma$. The current flowing through $S_\gamma$ is, by Stokes' theorem, $I = \int_{S_\gamma}\nabla\times\bm{B}\cdot d\bm{S} = \oint_\gamma \bm{B}\cdot d\bm{x} - \oint_\ell \bm{B}\cdot d\bm{x}$. Because $\gamma$ must be contained in some constant-$\psi$ surface, and the curve $\gamma$ may be interpreted as a toroidal circuit, $I$ may be interpreted as the total poloidal current flowing in the volume bounded by the $\psi$-surface and the zero-locus of $\nabla\mathsf{p}$. But by the change-of-variables formula for line integrals $\oint_\gamma \bm{B}\cdot d\bm{x} = \oint_\gamma \bm{B}\,\overline{\cdot}\,d\bm{x} = \int_0^{2\pi} \bm{B}(\gamma(\lambda))\,\overline{\cdot}\,\bm{u}(\gamma(\lambda))\,d\lambda = 2\pi\, C(\psi(\gamma))$. Therefore the total poloidal current contained in a $\psi$-surface is $I = 2\pi\,C(\psi) - \oint_\ell \bm{B}\cdot d\bm{x}$, which implies that $C(\psi) = \frac{1}{2\pi} I + \text{const.}$ is the poloidal current flowing in a $\psi$-surface plus a constant. If the equilibrium $(\bm{B},\mathsf{p})$ in $Q$ is sustained by a collection of current-carrying coils outside of $Q$ then the constant is readily seen to be the total poloidal coil current divided by $2\pi$. }
\end{remark}

\section{Interpretation of the GGS equation\label{interp_sec}}
Rigidly-symmetric equilibria are completely characterized by the classical Grad-Shafranov equation. This means (a) if $(\bm{B},\mathsf{p})$ is a solution with a continuous Euclidean symmetry then it must satisfy the Grad-Shafranov equation, and (b) if $\psi$ is a solution of the Grad-Shafranov equation then it provides a solution of the equilibrium equations. In contrast, three-dimensional solutions are not completely characterized by the generalized Grad-Shafranov equation. While (a) is still true, (b) is not necessarily so. This can be seen from the derivation of the GGS equation in the proof of Theorem \ref{theorem:GS_equation}. The crucial step in the derivation was finding Eq.\,\eqref{eq:averaged_gs} by averaging force balance over the circle-action $\Phi_\theta$. While the GGS equation ensures that this averaged force balance condition is satisfied, it says nothing \emph{a priori} about the fluctuating part.

Although solutions of the Grad-Shafranov equation need not be exact solutions of the equilibrium equations, they do happen to be divergence-free. To be precise, if $\psi$ is a solution of the GGS equation defined relative to some volume-preserving circle-action $\Phi_\theta$ then $\bm{B}$ given by Eq.\,\eqref{stream_rep_final} satisfies $\nabla\cdot\bm{B} = 0$. This follows from the following simple calculation.
\begin{align}
\nabla\cdot \bm{B }  & = \nabla\cdot\left(C(\psi)\frac{\bm{u}}{R^2}\right) + \nabla\cdot\left(\frac{\rho_{\overline{g}}\overline{\nabla}\psi\overline{\times}\bm{u}}{R^2}\right)\nonumber\\
& = \rho_{\overline{g}}\overline{\nabla}\overline{\cdot}\left(\frac{\overline{\nabla}\psi\overline{\times}\bm{u}}{R^2}\right)\nonumber\\
& = -\rho_{\overline{g}}\overline{\nabla}\overline{\times}\left(\frac{\bm{u}}{R^2}\right)\overline{\cdot}\,\overline{\nabla}\psi\nonumber\\
& = 0,
\end{align}
where we have used the structure equation \eqref{force_free}, which says that $\bm{u}/R^2$ is force-free with respect to the averaged metric, on the last line. 

In light of the preceding remarks, solutions of the GGS equation represent smooth approximate solutions of the equilibrium equations. { To find such an approximate solution, it is sufficient to specify a volume-preserving circle-action $\Phi_\theta$, along with the pair of free functions $C(\psi),p(\psi)$. Then, if the GGS equation can be solved, $\bm{B}$ can be constructed using Eqs.\,\eqref{ggs_eqn} and \eqref{stream_rep_final}. }

{ \emph{A priori}, a solution of the GGS equation provides an approximate solution in a rather weak sense -- while $\nabla\cdot\bm{B} = 0$ is satisfied exactly, force balance is only satisfied on average.}
\begin{theorem}\label{approx_thm_summary}
{ Fix a domain $Q$ (diffeomorphic to the solid torus), a pair of smooth single-variable functions $C(\psi),p(\psi)$, and a volume-preserving circle action $\Phi_\theta$ with nowhere vanishing infinitesimal generator $\bm{u}$. Let $\psi:Q\rightarrow\mathbb{R}$ be an $S^1$-invariant solution of the associated generalized Grad-Shafranov equation \eqref{ggs_eqn}. The pair $(\bm{B},\mathsf{p})$, with $\mathsf{p}(\bm{x}) = p(\psi(\bm{x}))$ and $\bm{B}$ given by Eq.\,\eqref{stream_rep_final}, satisfies
\begin{gather}
\nabla\cdot\bm{B} = 0\\
(\overline{\nabla}\,\overline{\times}\,\bm{B})\,\overline{\times}\,\bm{B} = \overline{\nabla}\mathsf{p}.
\end{gather}
Equivalently, $\nabla\cdot\bm{B} = 0$ and
\begin{align}
\int_0^{2\pi} \,\nabla\Phi_\theta(\bm{x})\cdot\left[(\nabla\times\bm{B})\times\bm{B} - \nabla\mathsf{p}\right](\Phi_\theta(\bm{x}))\,d\theta = 0\label{average_force_balance}
\end{align}
for each $\bm{x}\in Q$.  
}
\end{theorem}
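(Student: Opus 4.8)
The plan is to work entirely in the ``averaged'' vector calculus, recognizing that the operators $\overline{\cdot},\overline{\times},\overline{\nabla},\overline{\nabla}\,\overline{\cdot}\,,\overline{\nabla}\,\overline{\times}$ are nothing but the ordinary Riemannian dot product, cross product, gradient, divergence, and curl of the metric $\overline{g}$, with volume form $\overline{\Omega}=\rho_{\overline{g}}\,d^3\bm{x}$, and that $\bm{u}$ is a Killing field for $\overline{g}$ since $L_{\bm{u}}\overline{g}=0$. In this picture the averaged force balance $(\overline{\nabla}\,\overline{\times}\,\bm{B})\,\overline{\times}\,\bm{B}=\overline{\nabla}\mathsf{p}$ is literally the magnetohydrostatic equation on the Riemannian manifold $(Q,\overline{g})$, which carries the isometric circle action $\Phi_\theta$, so the proof amounts to running the Grad--Shafranov reduction of Theorem \ref{theorem:GS_equation} in reverse. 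First I would record the preliminary facts: $\nabla\cdot\bm{B}=0$ is exactly the computation carried out just above the theorem; $\psi$, $R^2$, $\rho_{\overline{g}}$ and $\overline{g}$ are all $S^1$-invariant, so the field $\bm{B}$ of Eq.\,\eqref{stream_rep_final} satisfies $L_{\bm{u}}\bm{B}=0$ and likewise $L_{\bm{u}}\mathsf{p}=0$; and, using $\bm{u}\,\overline{\cdot}\,\overline{\nabla}\psi = L_{\bm{u}}\psi = 0$ together with the BAC--CAB identity for $\overline{\times}$, a one-line computation confirms the stream relation $\bm{u}\,\overline{\times}\,\bm{B}=\rho_{\overline{g}}\,\overline{\nabla}\psi$ and the flux-surface identities $\bm{u}\,\overline{\cdot}\,\overline{\nabla}\psi=\bm{B}\,\overline{\cdot}\,\overline{\nabla}\psi=0$ and $\bm{u}\,\overline{\cdot}\,\bm{B}=C$.

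Next I would show that the averaged current $\overline{\bm{J}}=\overline{\nabla}\,\overline{\times}\,\bm{B}$ has the same stream-function structure as $\bm{B}$. Contracting the curl--exterior-derivative relation $\iota_{\overline{\bm{J}}}\overline{\Omega}=\mathbf{d}(\iota_{\bm{B}}\overline{g})$ characteristic of the $\overline{g}$-curl with $\bm{u}$, and using $L_{\bm{u}}(\iota_{\bm{B}}\overline{g})=0$ and $\iota_{\bm{u}}\iota_{\bm{B}}\overline{g}=C$, gives $\bm{u}\,\overline{\times}\,\overline{\bm{J}}=\overline{\nabla}C=C^\prime(\psi)\,\overline{\nabla}\psi$. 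Crossing once more with $\bm{u}$ and applying BAC--CAB then yields
\begin{align}
\overline{\bm{J}} = K\,\frac{\bm{u}}{R^2} + C^\prime\,\frac{\overline{\nabla}\psi\,\overline{\times}\,\bm{u}}{R^2},\qquad K\equiv \bm{u}\,\overline{\cdot}\,\overline{\bm{J}}.
\end{align}

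The crux of the argument---and the step I expect to be the main obstacle---is to combine this with the GGS equation. Using the two parallel expressions for $\bm{B}$ and $\overline{\bm{J}}$ and BAC--CAB, the averaged Lorentz force collapses onto the flux-surface normal,
\begin{align}
\overline{\bm{J}}\,\overline{\times}\,\bm{B} = \frac{K\,\rho_{\overline{g}}-C\,C^\prime}{R^2}\,\overline{\nabla}\psi,
\end{align}
because the two ``toroidal'' terms and the two ``poloidal'' terms drop out. It therefore suffices to evaluate $K$. Taking the averaged divergence of $R^{-2}$ times the stream relation $\bm{u}\,\overline{\times}\,\bm{B}=\rho_{\overline{g}}\,\overline{\nabla}\psi$, expanding with the identity $\overline{\nabla}\,\overline{\cdot}\,(\bm{a}\,\overline{\times}\,\bm{b})=\bm{b}\,\overline{\cdot}\,\overline{\nabla}\,\overline{\times}\,\bm{a}-\bm{a}\,\overline{\cdot}\,\overline{\nabla}\,\overline{\times}\,\bm{b}$, and discarding the contribution of the poloidal part of $\bm{B}$ (which is $\overline{g}$-orthogonal to $\overline{\nabla}\,\overline{\times}\,(\bm{u}/R^2)$ by the force-free structure equation \eqref{force_free}), one finds that $K/R^2$ is exactly the left-hand side of the GGS equation \eqref{ggs_eqn}. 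Invoking the GGS equation then gives $K=(R^2 p^\prime+C\,C^\prime)/\rho_{\overline{g}}$, whereupon the displayed force reduces to $p^\prime(\psi)\,\overline{\nabla}\psi=\overline{\nabla}\mathsf{p}$. The delicate point is the bookkeeping of the factors of $R^2$ and $\rho_{\overline{g}}$, so that the purely kinematic expression for $K$ matches the GGS left-hand side without circular use of force balance; the derivation in Theorem \ref{theorem:GS_equation} guarantees that it does.

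Finally I would establish the equivalence with Eq.\,\eqref{average_force_balance}. The integrand there is, up to the musical identification of vectors and covectors through $g$, the pullback $\Phi_\theta^*$ of the flat force-balance residual $\iota_{\bm{B}}\mathbf{d}(\iota_{\bm{B}}g)-\mathbf{d}\mathsf{p}$. Because $\Phi_\theta$ preserves both $\bm{B}$ and $\mathsf{p}$, averaging this pullback over $\theta$ commutes with the contractions and the exterior derivative and replaces $g$ by $\overline{g}=\fint\Phi_\theta^*g\,d\theta$, producing precisely $\iota_{\bm{B}}\mathbf{d}(\iota_{\bm{B}}\overline{g})-\mathbf{d}\mathsf{p}$, i.e.\ the averaged residual $(\overline{\nabla}\,\overline{\times}\,\bm{B})\,\overline{\times}\,\bm{B}-\overline{\nabla}\mathsf{p}$; this is the same manipulation used to obtain Eq.\,\eqref{eq:averaged_gs}. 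Since $\overline{g}$ is nondegenerate, the averaged residual vanishes as a covector if and only if it vanishes as a vector, which closes the equivalence.
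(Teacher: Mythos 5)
Your proposal is correct and follows essentially the route the paper intends: the paper proves this theorem only implicitly, via the $\nabla\cdot\bm{B}=0$ computation displayed just before the statement plus the observation that the GGS equation encodes averaged force balance, and your argument fills this in by running the derivation of Theorem \ref{theorem:GS_equation} in reverse with the same flux relations, structure equations, and pullback-averaging identity used to obtain Eq.\,\eqref{eq:averaged_gs}. Your one refinement --- working with the averaged curl $\overline{\nabla}\,\overline{\times}\,\bm{B}$ throughout, since the flat curl $\nabla\times\bm{B}$ need not be $S^1$-invariant in this converse setting --- is exactly the bookkeeping the reversal requires, and the factors of $R^2$ and $\rho_{\overline{g}}$ in your evaluation of $K=\bm{u}\,\overline{\cdot}\,\overline{\nabla}\,\overline{\times}\,\bm{B}$ all check out.
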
 
\begin{remark}
{ By $S^1$-invariance of the GGS equation, if $\psi$ is a solution of the GGS equation that is not $S^1$-invariant then $\overline{\psi}(\bm{x})=\fint \psi(\Phi_\theta(\bm{x}))\,d\theta$ is an $S^1$-invariant solution.}
\end{remark}
{  However, the error associated with the approximation may also be quantified \emph{a posteriori} by measuring the size of the fluctuating part of the force-balance residual, i.e. $\bm{R} = [(\nabla\times\bm{B})\times\bm{B} - \nabla\mathsf{p}] - [(\overline{\nabla}\,\overline{\times}\,\bm{B})\,\overline{\times}\,\bm{B} - \overline{\nabla}\mathsf{p}]$. Such \emph{a posteriori} estimates may be used to formulate an optimization problem in the space of volume-preserving circle actions $\Phi_\theta$ for finding improved approximations.}

The GGS equation satisfies a variational principle. The Lagrangian density $\mathcal{L}(\bm{x},\psi,d\psi)$ is given by
\begin{align}
\mathcal{L} &= \frac{1}{2}\frac{\rho_{\overline{g}}^2\overline{\nabla}\psi\overline{\cdot}\overline{\nabla}\psi}{R^2} - \frac{1}{2}\frac{C^2(\psi)}{R^2} - p(\psi)\nonumber\\
 &+ \rho_{\overline{g}}\,D(\psi)\,(\bm{u}/R^2)\overline{\cdot}\,\overline{\nabla}\overline{\times}(\bm{u}/R^2),\label{action}
\end{align}
where $D(\psi) = \int^\psi C(\Psi)\,d\Psi$. A function $\psi$ is a solution of the GGS equation if and only if $\delta \int_Q \mathcal{L}\,d^3\bm{x} = 0$, where $\psi$ is subject to arbitrary variations that vanish on $\partial Q$. Note that the first three terms in $\mathcal{L}$ correspond to poloidal magnetic energy, minus toroidal magnetic energy, and minus internal energy, respectively. The physical interpretation of the last term is less clear. Using Proposition 11.4~of Ref.\,\onlinecite{Taylor_nl_2010}, it is straightforward to show that if (a) there are positive constants $b_p,b_c,b_d,c_p,c_c,c_d$ such that
\begin{align}
p(\psi)& \leq b_p \,|\psi| + c_p\\
C^2(\psi)& \leq b_c\,|\psi| + c_c\\
-D(\psi)\,\,(\bm{u}/R^2)\overline{\cdot}\,\overline{\nabla}\overline{\times}(\bm{u}/R^2)&\leq b_d\,|\psi| + c_d,
\end{align} 
and (b) $p,C$ are Lipshitz continuous, then the action functional $\int_Q\mathcal{L}\,d^3\bm{x}$ has a minimizer $\psi$ in the Sobolev space $H^1(Q)$ with $\psi = 0$ on $\partial Q$. Thus, for a large class of free functions $p,C$ the GGS equation can be solved.
%
%Since the Lagrangian \eqref{action} reduces to the usual Grad-Shafranov Lagrangian when $\bm{u}$ generates azimuthal rotations, a plausible approach to proving solvability of Eq.\,\eqref{ggs_eqn} is to construct a minimizer for $\int \mathcal{L}\,d^3\bm{x}$.

Note that not all volume-preserving circle-actions are created equal when it comes to assessing accuracy. According to Theorem \ref{hidden_symmetry_thm}, for any true three-dimensional solution there is a corresponding circle-action $\Phi_\theta$ whose GGS equation kills the fluctuating part of force balance exactly. However, there is no guarantee that an arbitrary guess for $\Phi_\theta$ will have this nice property.

Because there must be \emph{some} volume-preserving circle-action that kills the fluctuating part of force balance for any exact solution, to find true equilibrium solutions it is sufficient to search through the space of volume-preserving circle-actions $\Phi_\theta$. While the space of $\Phi_\theta$'s does exhibit some topological complexity, if we restrict attention to $\Phi_\theta$ with $\bm{u}$-lines that wrap just once around the torus $Q$ (corresponding to the topological type of axisymmetry) then we have the following straightforward parameterization of the space of $\Phi_\theta$.

\begin{proposition}\label{characterization_of_circle_actions}
Suppose $\Phi_\theta$ is a volume-preserving circle-action on $Q\approx D^2\times S^1$ and that one of the $\bm{u}$-lines generates the fundamental group $\pi_1(Q)$. Then there is a diffeomorphism $\psi: D^2\times S^1\rightarrow Q:(x,y,\zeta)\mapsto \psi(x,y,\zeta)$ such that 
\begin{align}
\partial_\zeta = \psi^*\bm{u}\\
\left(\frac{\int_Q\Omega}{2\pi^2}\right)d\zeta\wedge dx\wedge dy &= \psi^*\Omega,
\end{align}
where $\Omega$ is the standard Euclidean volume form on $Q$.
\end{proposition}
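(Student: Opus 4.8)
The plan is to construct the coordinate diffeomorphism $\psi$ in two stages: first produce an $S^1$-equivariant diffeomorphism that realizes the first identity $\partial_\zeta = \psi^*\bm{u}$, and then correct it by a fiber-preserving map so that the volume form is normalized as in the second identity. The whole argument rests on showing that the circle-action is \emph{free}, after which the two identities follow from standard bundle-triviality and Moser-type normalization.

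First I would establish freeness of $\Phi_\theta$; I expect this to be the \textbf{main obstacle}, since it is where the topological hypothesis does all of its work. Because the target identity $\partial_\zeta=\psi^*\bm{u}$ forces $\bm{u}$ to be nowhere vanishing, I must rule out both fixed points and exceptional orbits. A fixed point would, by the slice theorem, be conjugate to a linear rotation of $T_{\bm{x}}Q\cong\mathbb{R}^3$ and hence force $\bm{u}$ to vanish along a one-dimensional fixed locus; an exceptional orbit with finite isotropy $\mathbb{Z}_k$, $k\ge 2$, would force every nearby regular orbit to wind $k$ times around it. The generating hypothesis is what excludes these: the regular orbits are mutually freely homotopic and therefore all represent a single class $d\in\pi_1(Q)\cong\mathbb{Z}$ up to sign, and the $\bm{u}$-line that generates $\pi_1(Q)$ pins this class down to $d=\pm1$. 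An exceptional fiber of multiplicity $k$ would then require $k\mid 1$, which is impossible, so no exceptional (or fixed) orbits occur and the action is free. (This is exactly the place where the qualification ``$\bm{u}$-lines that wrap just once around the torus,'' i.e.\ the topological type of axisymmetry, is essential, and the delicate analysis of orbit types lives entirely here.)

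Given freeness, $Q$ is a principal $S^1$-bundle over the compact surface-with-boundary $B=Q/S^1$, and I would identify $B$ from the homotopy exact sequence of the bundle. Since the fiber is an orbit generating $\pi_1(Q)$, the inclusion induces a surjection (in fact an isomorphism) $\pi_1(S^1)\to\pi_1(Q)$, so $\pi_1(B)=0$; a simply connected compact surface with nonempty boundary is diffeomorphic to $D^2$. Because $D^2$ is contractible the principal bundle is trivial, yielding an $S^1$-equivariant diffeomorphism $\Psi_0:D^2\times S^1\to Q$ that intertwines rotation of the $S^1$-factor with $\Phi_\theta$; differentiating in the group parameter gives $\Psi_0^*\bm{u}=\partial_\zeta$, which is the first identity.

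Finally I would normalize the volume. Writing $\Psi_0^*\Omega=\sigma\,d\zeta\wedge dx\wedge dy$, volume-preservation $L_{\bm{u}}\Omega=0$ gives $\partial_\zeta\sigma=0$, so $\sigma=\sigma(x,y)$ descends to $D^2$. I would then seek a correction of the form $h(x,y,\zeta)=(\phi(x,y),\zeta)$, which manifestly commutes with $\zeta$-translation and hence preserves $\partial_\zeta$, so that $\psi:=\Psi_0\circ h$ still satisfies $\psi^*\bm{u}=\partial_\zeta$. Imposing the second identity reduces to finding a diffeomorphism $\phi:D^2\to D^2$ that pushes the area form $\sigma\,dx\,dy$ forward to the constant-density form $c\,dx\,dy$ with $c=\int_Q\Omega/(2\pi^2)$. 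The total masses agree by construction, since $\int_{D^2}\sigma\,dx\,dy=\tfrac{1}{2\pi}\int_Q\Omega=c\,\mathrm{area}(D^2)$, so Moser's theorem (in the Dacorogna--Moser form for a disk with boundary) supplies such a $\phi$, and both identities then hold for $\psi=\Psi_0\circ h$. The only genuinely hard input is the freeness step; the bundle triviality over a contractible base and the two-dimensional Moser argument are standard.
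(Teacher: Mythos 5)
Your proof is correct, but it takes a genuinely different route from the paper's. The paper (via Propositions \ref{covering_lemma} and \ref{volume_covering} of the Appendix) never forms the quotient $Q/S^1$: it invokes Fried's theorem on homology directions to produce a global transverse disc $\eta(D^2)$, defines the candidate map explicitly as $\psi(x,y,\zeta)=\Phi_\zeta(\eta(x,y))$, checks by hand that this is a smooth covering map satisfying $\psi^*\bm{u}=\partial_\zeta$, and only at the last step uses the generating hypothesis to force the covering degree to be one, so that $\psi$ is a diffeomorphism. You instead spend the topological hypothesis at the outset to prove the action is free, and then obtain the product structure from bundle theory: $Q/S^1$ is a compact simply connected surface with nonempty boundary, hence $D^2$, over which every principal $S^1$-bundle is trivial. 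The two arguments converge in the volume normalization: both reduce to an area form on the disc factor (your $\sigma$, the paper's $\overline{\rho}$), both use $\zeta$-independence coming from $L_{\bm{u}}\Omega=0$, and both finish with Moser's theorem with the same mass balance. What your route buys: freeness is proved rather than assumed --- note that Proposition \ref{volume_covering} as stated does not hypothesize freeness even though it invokes Proposition \ref{covering_lemma}, which does, so your orbit-type analysis supplies a step the paper leaves implicit. What the paper's route buys: it is more elementary (inverse function theorem and covering-space arguments, no slice theorem or quotient-manifold machinery), and its covering-map formulation remains meaningful under the weaker hypothesis that some orbit is merely non-contractible, where the conclusion degrades gracefully from a diffeomorphism to a finite covering.

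One point you should make explicit: your step ``the $\bm{u}$-line that generates $\pi_1(Q)$ pins the common class down to $d=\pm1$'' is valid only if the hypothesis is read as saying that some $2\pi$-parameterized orbit loop $\theta\mapsto\Phi_\theta(\bm{x})$ generates $\pi_1(Q)$. If instead one reads it as saying that the underlying geometric circle generates, your inference fails when the generating orbit is itself exceptional, and in fact the proposition is then false: push forward the action $\theta\cdot(z,w)=(e^{i\theta}z,e^{2i\theta}w)$ on $D^2\times S^1$ to $Q$ and correct it by a Moser diffeomorphism; the result is a volume-preserving circle action whose core circle generates $\pi_1(Q)$, yet which has $\mathbb{Z}_2$ isotropy along the core, so no diffeomorphism can pull $\bm{u}$ back to $\partial_\zeta$. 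The paper's degree argument requires the same parameterized reading, so this is an ambiguity you share with the paper rather than a defect of your proof relative to it; but since your freeness step is exactly where it bites, you should state the reading you are using.
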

\begin{proof}
Because there is a $\bm{u}$-line that generates the fundamental group, the covering map $\psi$ provided by Proposition \ref{volume_covering} in the Appendix is actually a diffeomorphism.
\end{proof}

Proposition \ref{characterization_of_circle_actions}, whose statement is rather technical, has the following intuitive interpretation. Suppose we have found a coordinate system $(x,y,\zeta)$ on $Q$ such that the Jacobian determinant is the constant $c_0 = \left(\frac{\int_Q\Omega}{2\pi^2}\right)$ and the boundary $\partial Q$ is mapped to $x^2 + y^2 = 1$. (The angle $\zeta\in\mathbb{R}\text{ mod }2\pi$.) The covariant basis vector $\partial_\zeta$ associated with this coordinate system generates a volume-preserving circle-action according to $\Phi_\theta = \exp(\theta\,\partial_\zeta):(x,y,\zeta)\mapsto (x,y,\zeta + \theta)$. Proposition \ref{characterization_of_circle_actions} says that all volume-preserving circle-actions (with $\bm{u}$-lines that have the correct topology) may be constructed in this manner. Thus, searching the space of $\Phi_\theta$ may be accomplished by searching through the space of coordinate systems $(x,y,\zeta)$ with constant Jacobian. 

We may always use the standard $(R,\phi,Z)$ cylindrical coordinates to construct one such $(x,y,\zeta)$ coordinate system. The Euclidean volume element in $(R,\phi,Z)$ coordinates is $R \,dR\,d\phi\,dZ$. Therefore if we define $(x,y,\zeta)$ according to 
\begin{align}
x &= \frac{1}{2}R^2,\quad y = Z,\quad \zeta = \phi
\end{align}
the Euclidean volume element becomes $dx\,d\zeta\,dy$. Starting from this basic set of unit-Jacobian toroidal coordinates, all other such coordinate system may be generated using the following result.

\begin{proposition}
Let $c_0 = \int_Q\Omega/(2\pi^2)$, where $Q\approx D^2\times S^1$. If $\psi_0: D^2\times S^1\rightarrow Q$ is a given diffeomorphism such that $\psi_0^*\Omega = c_0\,d\zeta\wedge dx\,\wedge dy$, then any diffeomorphism $\psi:D^2\times S^1\rightarrow Q$ with $\psi^*\Omega = c_0\,d\zeta\wedge dx\,\wedge dy$ admits the decomposition
\begin{align}
\psi = \psi_0\circ E,
\end{align}
where $E:D^2\times S^1\rightarrow D^2\times S^1$ is a volume-preserving mapping of $D^2\times S^1$.
\end{proposition}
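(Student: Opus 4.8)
The plan is to produce the map $E$ explicitly and then verify the volume-preserving property by a one-line pullback computation, exploiting the fact that $\psi_0$ is a diffeomorphism. First I would set
\begin{align}
E := \psi_0^{-1}\circ\psi,
\end{align}
which is well-defined precisely because $\psi_0$ is invertible, and which is automatically a diffeomorphism of $D^2\times S^1$ onto itself as a composition of diffeomorphisms. The desired decomposition $\psi = \psi_0\circ E$ is then immediate from $\psi_0\circ\psi_0^{-1} = \text{id}_Q$.

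It remains to check that $E$ preserves the reference volume form $\omega := c_0\,d\zeta\wedge dx\wedge dy$ on $D^2\times S^1$ (equivalently, that $E$ has unit Jacobian determinant in the $(x,y,\zeta)$ coordinates, since $c_0$ is a nonzero constant). Using contravariant functoriality of pullback, $(f\circ g)^* = g^*\circ f^*$, together with the two hypotheses $\psi_0^*\Omega = \omega$ and $\psi^*\Omega = \omega$, I would compute
\begin{align}
E^*\omega = E^*(\psi_0^*\Omega) = (\psi_0\circ E)^*\Omega = \psi^*\Omega = \omega.
\end{align}
Hence $E^*\omega = \omega$, so $E$ is volume-preserving, completing the argument.

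There is essentially no analytic obstacle here: the entire content is the observation that the hypothesis imposed on $\psi_0$ is exactly what is needed to cancel its contribution in the chain of pullbacks. The only point requiring care is purely notational, namely fixing the convention that a \emph{volume-preserving mapping of $D^2\times S^1$} means preservation of the standard coordinate volume form $d\zeta\wedge dx\wedge dy$ (equivalently its constant multiple $\omega$), which is the volume form with respect to which Proposition \ref{characterization_of_circle_actions} and the surrounding discussion are phrased. Once that convention is fixed, the displayed pullback identity constitutes the whole proof.
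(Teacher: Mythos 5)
Your proof is correct and complete: setting $E=\psi_0^{-1}\circ\psi$ and computing $E^*\omega = E^*(\psi_0^*\Omega) = (\psi_0\circ E)^*\Omega = \psi^*\Omega = \omega$ is exactly the canonical argument, and your remark that preserving $c_0\,d\zeta\wedge dx\wedge dy$ is equivalent to preserving $d\zeta\wedge dx\wedge dy$ (constant $c_0\neq 0$) settles the only notational point. The paper in fact states this proposition without any proof, evidently regarding it as immediate, so your one-line pullback computation supplies precisely the justification the authors left implicit; there is no alternative route to compare against.
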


\noindent In other words, we may search through the space of $\Phi_\theta$ by searching through the space of coordinate transformations that preserve the standard volume element $dx\,dy\,d\zeta$. Such coordinate transformations are given by maps $(x,y,\zeta)\mapsto (\overline{x},\overline{y},\overline{\zeta})$ such that $|\partial(\overline{x},\overline{y},\overline{\zeta})/\partial(x,y,\zeta)| = 1$. The volume-preserving circle action defined by such a coordinate transformation is $\Phi_\theta = \exp(\theta\,\partial_{\overline{\zeta}})$. { Recent work in the area of physics-informed neural networks\cite{JZKT_arxiv_2020,BTM_arxiv_2020} has demonstrated the feasibility of optimization in the space of symplectic coordinate transformations. Since the symplectic property is more stringent than the volume-preserving property, this suggests that one practical approach to optimizing in the space of unit-Jacobian coordinate transformations is to develop a neural network architecture whose input-to-output mapping is area-preserving.}

\section{Comparison with previous work}
The GGS equation is equivalent to force balance averaged over the angular parameter $\theta$ associated with a hidden family of volume-preserving symmetries. There is another well-known averaged force-balance equation, given for instance in Eq.\,(2.24) of Ref.\,\onlinecite{White_tc_2014}. The distinct roles of these averaged equations may be understood in Hamada coordinates $(V,\theta_1,\theta_2)$. Without loss of generality, suppose we had chosen $\bm{u} = \partial_{\theta_1}$ in Theorem \ref{hidden_symmetry_thm}. The averaging underlying the GGS equation corresponds to averaging over $\theta_1$. In contrast, the derivation of Eq.\,(2.24) from Ref.\,\onlinecite{White_tc_2014} requires averaging over both $\theta_1$ and $\theta_2$, i.e.~a flux surface average. Therefore the GGS equation implies the averaged force-balance equation, but the converse is not true; the GGS equation is a strictly stronger condition. We may also summarize this comparison in terms of Fourier harmonics $F_{nm}$ of the force-balance equation in Hamada coordinates: the GGS equation is equivalent to $F_{0m} = 0$ for all poloidal model numbers $m\in\mathbb{Z}$, while the averaged force-balance equation is equivalent to $F_{00} = 0$.

In Ref.\,\onlinecite{Boozer_pf_1981}, following Grad,\cite{Grad_conj_1967} Boozer argues that singularities generally arise in three-dimensional solutions of the MHD equilibrium equations at rational flux surfaces with rotational transform $\iota = n/m$. The singularity in the parallel current density $j_\parallel = \bm{b}\cdot\nabla\times\bm{B}$ is proportional to the product of $p^\prime$ and $\delta_{nm}$ at the rational surface, where $\delta_{nm}$ is the $(n,m)$ Fourier harmonic of $1/|\bm{B}|^2$. Averaging the force balance equation over $\Phi_\theta$ mathematically regularizes this problem because $L_{\bm{u}}\langle |\bm{B}|^2\rangle  = L_{\bm{u}} (\bm{B}\overline{\cdot}\bm{B}) = 0$, which implies the field strength defined with respect to the averaged metric only depends on a single integer combination of Hamada coordinates. This is one way to understand the lack of singularities in solutions of the GGS equation.

The classical Grad-Shafranov equation obeys a variational principle that is a special case of the variational principle for the GGS equation. (C.f. Eq.\,\eqref{action}.) We refer the reader to Ref.\,\onlinecite{Lao_1981}, where the GS variational principle is used to develop approximate solutions of the GS equation. As we mentioned in Section \ref{interp_sec}, the GGS variational principle may be used to prove the existence of solutions of the GGS equation. As such, we believe the GGS variational principle may be useful for constructing numerical approximations of such solutions. { Note that the variational principle we provide for the GGS equation is for the fixed-boundary formulation of the equilibrium problem. It would be interesting to extend our results to allow for a plasma-vacuum interface in future work.}

%The GGS equation also satisfies a variational principle. The Lagrangian density $\mathcal{L}(\psi,d\psi)$ is given by
%\begin{align}
%\mathcal{L} &= \frac{1}{2}\frac{\rho_{\overline{g}}^2\overline{\nabla}\psi\overline{\cdot}\overline{\nabla}\psi}{R^2} - \frac{1}{2}\frac{C^2(\psi)}{R^2} - p(\psi)\nonumber\\
% &+ \rho_{\overline{g}}\,D(\psi)\,(\bm{u}/R^2)\overline{\cdot}\,\overline{\nabla}\overline{\times}(\bm{u}/R^2),
%\end{align}
%where $D(\psi) = \int^\psi C(\Psi)\,d\Psi$. A function $\psi$ is a solution of the GGS equation if and only if $\delta \int_Q \mathcal{L}\,d^3\bm{x} = 0$, where $\psi$ is subject to arbitrary variations that vanish on $\partial Q$. Note that the first three terms in $\mathcal{L}$ correspond to poloidal magnetic energy, minus toroidal magnetic energy, and minus internal energy, respectively. The physical interpretation of the last term is less clear. %However, because $\bm{u}/R^2$ is force-free with respect to the averaged metric there must be some scalar field $\lambda$ such that $\overline{\nabla}\overline{\times}(\bm{u}/R^2) = (\lambda/\rho_{\overline{g}}) (\bm{u}/R^2)$. The last term may therefore also be written $D(\psi)\lambda/R^2$.

In this Article, we have advocated a strategy for finding exact three-dimensional solutions of the MHD equilibrium equations that involves optimization over a space of coordinate transformations with unit Jacobian determinant. In Ref.\,\onlinecite{Bhattacharjee_1984}, Bhattacharjee \emph{et. al.} propose a different optimization strategy for finding three-dimensional equilibria that also involves a search over a space of coordinate transformations. The latter reference imposes the constraint on $\bm{B}$ that it admits a system of flux coordinates. In contrast, the approach we advocate here imposes a much stronger constraint on $\bm{B}$, namely that it solves the GGS equation. (Many magnetic fields that admit flux coordinates do not satisfy the GGS equation.) That said, we have not indicated an objective functional for our optimization procedure, whereas Ref.\,\onlinecite{Bhattacharjee_1984} proposes Grad's action\cite{Grad_vp_1964} $\int_Q (|\bm{B}|^2/2 - p(\psi))\,d^3\bm{x}$. We plan to determine whether Grad's action still serves as a valid objective functional when the stronger GGS constraint is imposed in future work. An alternative objective functional would be the $L^2$-norm of the residual of force balance.

In previous work\cite{BKM_2019} we identified a quasisymmetric variant of the Grad-Shafranov equation. (See Theorem 10.5 in Ref.\,\onlinecite{BKM_2019}.) Its role in the theory of equilibria is distinct from the GGS equation discussed in this Article. Where the GGS equation applies to all nondegenerate solutions, the quasisymmetric GS equation applies specifically to equilibria that are quasisymmetric. Consequently, the GGS equation and the quasisymmetric GS equation simultaneously apply to every quasisymmetric equilibrium. As noted in Ref.\,\onlinecite{BKM_2019}, the quasisymmetric GS equation does not possess a natural variational principle unless the infinitesimal generator of quasisymmetry $\bm{u}$ satisfies $(\nabla\times\bm{u})\times\bm{u} + \nabla(\bm{u}\cdot\bm{u}) = 0$. It is therefore curious that the GGS equation always possesses a variational principle. We plan to investigate the relationship between these two equations further in future work.

An alternative notion of approximate smooth three-dimensional equilibrium solutions has been developed recently by Ginsberg, Constantin, and Drivas.\cite{Ginsberg_2020} In contrast to the approximate smooth solutions provided by the GGS equation,{ (whose approximation properties are summarized in Theorem \ref{approx_thm_summary})} these approximate solutions satisfy force balance with a small {  external force} whose size is explicitly controlled. { If Grad's conjecture is true then such approximate solutions may be the closest one can get to ``true" smooth solutions of the MHD equilibrium equations. Physically, deviations from ideal force balance are to be expected in the presence of plasma flow or effects beyond the ideal model such as pressure anisotropy. Therefore, provided the external force is comparable to the ``missing terms" in the force balance equation, these approximate smooth solutions should be considered just as adequate as an exact solution of the ideal equilibrium equations. }

{ If Grad's conjecture is true, our proposed method for finding smooth three-dimensional equilibria based on minimizing the norm of the residual $\bm{R} = [(\nabla\times\bm{B})\times\bm{B} - \nabla\mathsf{p}] - [(\overline{\nabla}\,\overline{\times}\,\bm{B})\,\overline{\times}\,\bm{B} - \overline{\nabla}\mathsf{p}]$ over the space of volume-preserving circle actions may be ill-conditioned or even ill-posed. However, the work of Ginsberg, Constantin, and Drivas (GCD) suggests that a reasonable alternative optimization objective would be to minimize the norm of $\bm{R}^\prime = [(\nabla\times\bm{B})\times\bm{B} - \nabla\mathsf{p}] - [(\overline{\nabla}\,\overline{\times}\,\bm{B})\,\overline{\times}\,\bm{B} - \overline{\nabla}\mathsf{p}] + \bm{f} $, where $\bm{f}$ is a small external force. Note that minimizing the norm of $\bm{R}^\prime$ would not result in a true solution of the equilibrium equations. Instead, minimizing the norm of $\bm{R}^\prime$ should be interpreted as a method for computing approximate equilibria similar to those constructed by GCD. We plan to investigate each of these approaches in future work. }

\section{Acknowledgement}
The authors are grateful for helpful discussions with N. Duignan, D. Ginsberg, T. Drivas, and P. Constantin. Research presented in this article was supported by the Los Alamos National Laboratory LDRD program under project number 20180756PRD4 and by the Simons Foundation (601970, RSM). Data sharing is not applicable to this article as no new data were created or analyzed in this study.

\appendix
\section{Characterization of general volume-preserving circle actions}

\begin{proposition}\label{covering_lemma}
Let $\Phi_\theta : Q\rightarrow Q$ be a free circle action on a manifold $Q\approx D^2\times S^1$ with infinitesimal generator $\bm{u}$. Assume at least one $\bm{u}$-line is not homotopic to the trivial loop. Then there exists a smooth covering map $\psi:D^2\times S^1\rightarrow Q:(x,y,\zeta)\mapsto \psi(x,y,\zeta)$ such that
\begin{align}
 \psi^*\bm{u} =\partial_\zeta.
\end{align}
\end{proposition}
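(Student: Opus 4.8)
The plan is to recognize $Q$ as the total space of a principal $S^1$-bundle and then trivialize it. First I would note that since $\Phi_\theta$ is a free action of the compact group $S^1$, the orbit space $B := Q/S^1$ is a smooth manifold and the projection $q:Q\to B$ is a principal $S^1$-bundle, with $\bm{u}$ spanning the vertical (fiber) directions. Because $\bm{u}$ is tangent to $\partial Q$ (as observed in the Remark following Definition \ref{def:infinitesimal_generator}), the action preserves the boundary torus and descends to the boundary, so $B$ is a compact connected surface whose unique boundary circle is $\partial Q/S^1$.

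Second, I would trivialize $q$. A compact surface with nonempty boundary deformation retracts onto a $1$-dimensional spine, hence is homotopy equivalent to a wedge of circles; therefore $H^2(B;\mathbb{Z})=0$. Since smooth principal $S^1$-bundles over $B$ are classified by their Euler class in $H^2(B;\mathbb{Z})$, the bundle $q$ is smoothly trivial and admits a smooth global section $\sigma:B\to Q$. Using $\sigma$ I would define the map $\Psi:B\times S^1\to Q$ by $\Psi(b,\theta)=\Phi_\theta(\sigma(b))$. Freeness of the action makes $\Psi$ a bijection, and it is a diffeomorphism that intertwines the standard rotation of the $S^1$-factor with $\Phi_\theta$, in the sense that $\Psi(b,\theta+\theta')=\Phi_{\theta'}(\Psi(b,\theta))$ by additivity of the action.

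Third, I would identify the base and verify the differential relation. From $B\times S^1\cong Q\approx D^2\times S^1$ one reads off $\pi_1(B)\times\mathbb{Z}\cong\mathbb{Z}$, so $\pi_1(B)=0$; a simply connected compact surface with a single boundary circle is diffeomorphic to $D^2$. Precomposing $\Psi$ with a diffeomorphism $D^2\to B$ produces the required $\psi:D^2\times S^1\to Q$. Differentiating the intertwining relation with respect to $\theta'$ at $\theta'=0$ and using the definition of the infinitesimal generator gives $d\psi(\partial_\zeta)=\bm{u}\circ\psi$, i.e. $\psi^*\bm{u}=\partial_\zeta$. Being a diffeomorphism, $\psi$ is in particular a smooth covering map, as claimed. (The non-contractibility hypothesis is what forbids the orbit circles from bounding discs; the homotopy exact sequence of $q$ in fact shows this always holds for free circle actions on the solid torus, which is why the construction yields a genuine diffeomorphism rather than a multi-sheeted cover.)

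I expect the main obstacle to be the construction of the smooth global section $\sigma$ — equivalently, the smooth triviality of $q$ — carried out carefully in the presence of the boundary: one must verify that the quotient is a bona fide manifold-with-boundary and a principal bundle when the action is merely tangent to $\partial Q$, and that the trivialization (or the retraction of $B$ onto its spine) can be arranged to be smooth up to $\partial B$. The remaining steps, namely the intertwining computation and the identification $B\cong D^2$, are then routine.
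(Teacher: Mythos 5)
Your proof is correct, but it takes a genuinely different route from the paper's. The paper never forms the quotient $Q/S^1$: it invokes Fried's theorem on homology directions (Theorem D of Ref.\,\onlinecite{Fried_1982}) to produce a global transverse section $\eta:D^2\rightarrow Q$ for the orbits, defines $\psi(x,y,\zeta)=\Phi_\zeta(\eta(x,y))$, and then checks by hand that this map is surjective, a local diffeomorphism, and evenly covering, with some finite number $n\geq 1$ of sheets. Under the stated hypothesis of a strictly free action, your bundle-theoretic argument actually proves more: the principal bundle $Q\rightarrow Q/S^1$ is trivial because $H^2$ of a compact surface with nonempty boundary vanishes, so $\psi$ can be taken to be a diffeomorphism (the cover has a single sheet), and the non-contractibility hypothesis is automatic by the homotopy exact sequence, exactly as you note in your parenthetical. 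The paper only recovers the diffeomorphism statement later, in Proposition \ref{characterization_of_circle_actions}, by adding the hypothesis that some orbit generates $\pi_1(Q)$. What the paper's more pedestrian construction buys in exchange is robustness when ``free'' is weakened to ``locally free'' (nowhere-vanishing generator but possible finite isotropy): for a Seifert fibration of the solid torus such as $\Phi_\theta(z,w)=(e^{ip\theta}z,e^{iq\theta}w)$ with $\gcd(p,q)=1$, $q\geq 2$, the quotient is an orbifold rather than a manifold, so your first step has no principal bundle to trivialize, while the cross-section construction still yields a genuinely $q$-sheeted covering satisfying $\psi^*\bm{u}=\partial_\zeta$. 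That is the situation the covering-map phrasing is designed for, and it is the form in which the result is invoked in Proposition \ref{volume_covering}, whose statement does not assume freeness. The two gaps you flag --- the quotient manifold theorem on a manifold with boundary and smoothness of the section/trivialization up to $\partial B$ --- are genuine points of care but standard (equivariant collar neighborhoods plus the usual smoothing of continuous sections), so they do not undermine your argument.
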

\begin{proof}
First we will establish the existence of a global transverse section for the circle action, i.e.~an embedding $\eta: D^2\rightarrow Q$ such that every $U(1)$-orbit intersects $\eta(D^2)$ transversally. Choose a point $\bm{x}\in Q$ and let $\ell(\bm{x})$ be the $S^1$-orbit passing through $\bm{x}$ oriented in the obvious manner. The orbit $\ell(\bm{x})$ determines an element of the fundamental group $[\ell(\bm{x})]\in\pi_1(Q)(=\mathbb{Z})$. Because $Q$ is path connected and $\Phi_\theta$ is continuous $[\ell(\bm{x})]$ is independent of $\bm{x}\in Q$. It follows that the homology directions of $\Phi_\theta$ comprise a singleton set $\{\sigma\}$. Because at least one $S^1$ orbit is not homotopic to the trivial loop the point $\sigma$ cannot be zero. Therefore the homology directions of $\Phi_\theta$ are contained in an open half-space, and Theorem D of Ref.\,\onlinecite{Fried_1982} implies the existence of a global cross section.

Next we will use the embedding $\eta$ to explicitly construct a candidate for $\psi$ and prove the result is indeed a smooth covering map. For $(x,y)\in D^2$ and $\zeta\in S^1$ define
\begin{align}
\psi(x,y,\zeta) = \Phi_\zeta(\eta(x,y)).\label{psi_formula}
\end{align}
We will show that $\psi:D^2\times S^1\rightarrow Q$ is a surjective local diffeomorphism with the even covering property. 

\emph{Surjectivity -- } If $\bm{x}\in Q$ then because $\eta$ is a global transverse section there must be some $\zeta\in S^1$ such that $\Phi_{-\zeta}(\bm{x}) =\bm{x}^\prime \in\eta(D^2)$. Because $\bm{x}^\prime = \eta(x,y)$ for some $(x,y)\in D^2$, we therefore have $\bm{x} = \Phi_\zeta(\eta(x,y)) =\psi(x,y,\zeta)$. 

\emph{Local diffeomorphism -- } Fix $(x_0,y_0,\zeta_0)\in D^2\times S^1$. For any $\zeta\in S^1$, let $P_\zeta = \Phi_\zeta(\eta(D^2))$ be the translation of the embedded disc by $\zeta$. Note that $P_\zeta$ is a global transverse section for the circle action for each $\zeta$. The vectors $\partial_x,\partial_y\in T_{(x_0,y_0,\zeta_0)}D^2\times S^1$ are mapped to 
\begin{align}
e_x &= T_{\eta(x_0,y_0)}\Phi_{\zeta_0}T_{(x_0,y_0)}\eta[\partial_x]\\
e_y & =  T_{\eta(x_0,y_0)} \Phi_{\zeta_0}T_{(x_0,y_0)}\eta[\partial_x],
\end{align}
by the tangent mapping $T_{(x_0,y_0,\zeta_0)}\psi$.
Since $\eta$ defines a diffeomorphism $D^2\rightarrow P_0$ and $\Phi_{\zeta_0}$ restricts to a diffeomorphism $P_0\rightarrow P_{\zeta_0}$, the vectors $(e_x,e_y)$ frame the tangent space to $P_{\zeta_0}$ at $\psi(x_0,y_0,\zeta_0)$. Because the vector $\partial_\zeta$ is mapped to $e_\zeta = \bm{u}(\psi(x_0,y_0,\zeta_0))$ by $T_{(x_0,y_0,\zeta_0)}\psi$ and $\bm{u}$ is transverse to each $P_\zeta$, this shows that $(e_x,e_y,e_\zeta)$ frames the tangent space to $Q$ at $\psi(x_0,y_0,\zeta_0)$. It follows that the tangent mapping $T_{(x_0,y_0,\zeta_0)}\psi$ is invertible for any $(x_0,y_0,\zeta_0)\in D^2\times S^1$, and, by the inverse function theorem, that $\psi$ is a local diffeomorphism.

\emph{Even covering property -- } For any $\bm{p}\in \eta(D^2)$ define $\gamma_{\bm{p}}:S^1\rightarrow Q: \theta\mapsto \Phi_{\theta}(\bm{x})$. Because $\bm{u}$ is transverse to $\eta(D^2)$ the preimage $\gamma_{\bm{p}}^{-1}(\eta(D^2))$ must be a discrete subset $\{\theta_i(\bm{p})\}$ of $S^1$. Because $|\bm{u}|$ acquires a minimum value on $\eta(D^2)$, there exists a constant $r>0$ such that the balls $B_r(\theta_i)\subset S^1$ are mutually disjoint. Therefore the preimage is in fact a finite subset of $S^1$. Without loss of generality, assume that the $\theta_i(\bm{p})$ are ordered so that $\theta_i(\bm{p}) < \theta_j(\bm{p})$ whenever $i < j$. Also without loss of generality, assume $\theta_1(\bm{p}) = 0$ and that $i\in\{1,\dots,n\}$ for some positive integer $n(\bm{p})$. Because the cardinality $n(\bm{p})$ determines the homotopy type of the orbit $\gamma_{\bm{p}}$ and $\Phi_\zeta$ is continuous for each $\zeta$, $n(\bm{p}) = n$ must be independent of $\bm{p}$. Because $\gamma_{\bm{p}}$ varies smoothly with $\bm{p}$, each $\theta_i:\eta(D^2)\rightarrow S^1$ defines a smooth $S^1$-valued function $\eta(D^2)$.

For $(x_1,y_1,\zeta_1)\in D^2\times S^1$ choose an open neighborhood $U\ni (x_1,y_1,\zeta_1)$ sufficiently small to ensure $\psi|U$ is a diffeomorphism onto its image $V = \psi(U)$. Now consider the preimage $W = \psi^{-1}(V)$. Because $V$ is open by construction and $\psi$ is continuous, $W$ is an open subset of $D^2\times S^1$. In order to show that $\psi$ satisfies the even covering property we will show that $W$ is a finite disjoint union of open sets $W_i$ such that $\psi|W_i$ is a diffeomorphism onto its image for each $i$. 

To that end, first consider the preimage of $\bm{x} = \psi(x_1,y_1,\zeta_1)\in V$. Set $\bm{p} = \eta(x_1,y_1)$. We claim $\psi^{-1}(\{\bm{x}\}) $ is the disjoint union of the $n$ points $(x_i,y_i,\zeta_i)$, $i\in\{1,\dots,n\}$, where
\begin{align}
(x_i,y_i) &= \eta^{-1}(\Phi_{\theta_i(\bm{p})}(\bm{p}))\label{deck1}\\
\zeta_i &= \zeta_1 - \theta_i(\bm{p}).\label{deck2}
\end{align}
It is clear that each $(x_i,y_i,\zeta_i)$ is contained in the preimage because 
\begin{align}
\psi(x_i,y_i,\zeta_i) & = \Phi_{\zeta_i}(\eta(x_i,y_i)) = \Phi_{\zeta_1-\theta_i(\bm{p})} (\Phi_{\theta_i(\bm{p})}(\bm{p}))\nonumber\\
& = \Phi_{\zeta_1}(\eta(x_1,y_1)) = \psi(x_1,y_1,\zeta_1).
\end{align}
Conversely, suppose $(\overline{x},\overline{y},\overline{\zeta})$ is any point in $\psi_{-1}(\{\bm{x}\})$. Then, because $\Phi_{\overline{\zeta}}(\eta(\overline{x},\overline{y})) = \Phi_{\zeta_1}(\bm{p})$, we have
\begin{align}
\eta(\overline{x},\overline{y}) = (\Phi_{\zeta_1 - \overline{\zeta}}(\bm{p})),
\end{align}
which says that $\zeta_1 - \overline{\zeta} = \theta_i(\bm{p})$ for some $i\in\{1,\dots,n\}$. In other words $(\overline{x},\overline{y},\overline{\zeta})$ must be of the form given by Eqs.\,\eqref{deck1}-\eqref{deck2}.

Now define $W_i$ as the image of $U$ under the mapping $d_i:U\rightarrow D^2\times S^1:(x,y,\zeta)\mapsto (\hat{x}_i,\hat{y}_i,\hat{\zeta}_i)$, where
\begin{align}
(\hat{x}_i,\hat{y}_i) &= \eta^{-1}(\Phi_{\theta_i(\eta(x,y))}(\eta(x,y)))\\
\hat{\zeta}_i & = \zeta - \theta_i(\eta(x,y)).
\end{align}
It is simple to verify that $d_i$ is a diffeomorphism onto its image for each $i\in\{1,\dots,n\}$. The argument from the previous paragraph shows that $\cup W_i\subset W$. Therefore we will prove the even covering property as soon as we show (a) that $W\subset W_i$, and (b) that the the $W_i$ are mutually disjoint. To see that property (a) is satisfied, observe that if $(x,y,\zeta)\in W$ then there must be a point $\bm{x}\in V$ such that $(x,y,\zeta)$ is contained in $\psi^{-1}(\{\bm{x}\})$. But the previous paragraph's argument shows that $(x,y,\zeta)$ must therefore be the image of $(\psi|U)^{-1}(\bm{x})\ni U$ under one of the $d_i$. For property (b) we may merely shrink $U$ as necessary.

Now that we have shown $\psi D^2\times S^1$ is a smooth covering map we only need to show $\partial_\zeta = \psi^*\bm{u}$. But this follows from
\begin{align}
(\psi^*\bm{u})(x,y,\zeta) = (T_{(x,y,\zeta)}\psi)^{-1}[\bm{u}(\psi(x,y,\zeta))],
\end{align}
and $T_{(x,y,\zeta)}\psi[\partial_\zeta] = \bm{u}(\psi(x,y,\zeta))$.

\end{proof}

\begin{proposition}
Let $\psi,\overline{\psi}:D^2\times S^1\rightarrow D^2\times S^1$ be a pair of smooth covering maps such that $\psi_*\pi_1(D^2\times S^1) = \overline{\psi}_*\pi_1(D^2\times S^1)$. Then there is a diffeomorphism $E:D^2\times S^1\rightarrow D^2\times S^1$ such that $\psi(x,y,\zeta) = \overline{\psi}(E(x,y,\zeta))$.
\end{proposition}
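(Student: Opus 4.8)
The plan is to recognize this statement as an instance of the classical uniqueness theorem for lifts of maps through covering spaces, and then to upgrade the resulting homeomorphism to a diffeomorphism using the fact that smooth covering maps are local diffeomorphisms. Throughout I would exploit that $\pi_1(D^2\times S^1)\cong\pi_1(S^1)\cong\mathbb{Z}$ is abelian, so that the image subgroups $\psi_*\pi_1(D^2\times S^1)$ and $\overline{\psi}_*\pi_1(D^2\times S^1)$ are genuine subgroups of $\mathbb{Z}$ independent of any choice of base point or connecting path. This removes the conjugacy bookkeeping that usually clutters the lifting criterion, and it makes the stated hypothesis $\psi_*\pi_1=\overline{\psi}_*\pi_1$ unambiguous.

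First I would fix a base point $q_0$ in the domain of $\psi$, set $x_0=\psi(q_0)$, and pick a point $\overline{q}_0\in\overline{\psi}^{-1}(x_0)$. Since $D^2\times S^1$ is connected and locally path-connected and the hypothesis gives $\psi_*\pi_1\subseteq\overline{\psi}_*\pi_1$, the lifting criterion produces a unique continuous map $E:D^2\times S^1\to D^2\times S^1$ with $\overline{\psi}\circ E=\psi$ and $E(q_0)=\overline{q}_0$. Because the hypothesis is an \emph{equality} of subgroups, the same criterion with the roles of $\psi$ and $\overline{\psi}$ exchanged yields a continuous map $F$ with $\psi\circ F=\overline{\psi}$ and $F(\overline{q}_0)=q_0$.

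Next I would show $E$ and $F$ are mutually inverse using the uniqueness clause of the lifting theorem. The composite $F\circ E$ satisfies $\psi\circ(F\circ E)=\overline{\psi}\circ E=\psi$ and fixes $q_0$, so it is a lift of $\psi$ through $\psi$ agreeing with the identity at the base point; uniqueness of such lifts on the connected domain forces $F\circ E=\mathrm{id}$, and symmetrically $E\circ F=\mathrm{id}$. Hence $E$ is a homeomorphism with inverse $F$.

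The final step, which I expect to be the only genuinely nonformal point, is promoting $E$ to a diffeomorphism. Here I would use that $\psi$ and $\overline{\psi}$ are \emph{smooth} covering maps, hence local diffeomorphisms: near any point $q$, choose an evenly covered neighborhood of $\psi(q)$ together with the sheet of $\overline{\psi}$ containing $E(q)$, on which $\overline{\psi}$ restricts to a diffeomorphism $\sigma$; then locally $E=\sigma^{-1}\circ\psi$ is smooth, and the same argument applied to $F$ shows $E^{-1}=F$ is smooth. Therefore $E$ is a diffeomorphism with $\psi=\overline{\psi}\circ E$, as required. The only subtleties to verify are at the level of the lifting criterion's hypotheses---connectedness, local path-connectedness, and base-point independence of the subgroups (the domain being a manifold with boundary causes no difficulty, since a covering restricts to a covering on the boundary)---all of which hold automatically here; once the criterion is invoked the remainder is formal.
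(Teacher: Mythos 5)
The paper states this proposition without any proof, treating it as standard covering-space theory, so there is no internal argument to compare yours against. Your proof is correct and is precisely the standard argument: the lifting criterion produces continuous maps $E$ and $F$ in both directions from the equality of image subgroups, uniqueness of lifts on a connected, locally path-connected space forces $F\circ E$ and $E\circ F$ to be identities, and smoothness of $E$ and $E^{-1}=F$ follows locally by writing $E=\sigma^{-1}\circ\psi$ on a sheet of an evenly covered neighborhood, using that a smooth covering map is a local diffeomorphism. The two points that genuinely need care are both handled properly: the basepoint/conjugacy ambiguity in the subgroups $\psi_*\pi_1$ and $\overline{\psi}_*\pi_1$ disappears because $\pi_1(D^2\times S^1)\cong\mathbb{Z}$ is abelian, and the presence of the boundary of $D^2\times S^1$ causes no trouble since the lifting theory is purely topological and the local-diffeomorphism argument applies verbatim at boundary points.
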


\begin{proposition}\label{volume_covering}
If $\Phi_\theta$ is a volume-preserving circle-action on $Q \approx D^2\times S^1$ with at least one orbit that is not homotopic to the trivial loop then there is a covering map $\psi:D^2\times S^1\rightarrow Q:(x,y,\zeta)\mapsto \psi(x,y,\zeta)$ such that 
\begin{align}
\partial_\zeta &= \psi^*\bm{u}\label{straight_u}\\
\left(\frac{\int_Q\Omega}{2\pi^2}\right)d\zeta\wedge dx\wedge dy &= \psi^*\Omega,\label{flat_volume}
\end{align}
where $\Omega$ is the standard Euclidean volume form on $Q$.
\end{proposition}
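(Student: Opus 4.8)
The plan is to start from the covering map already furnished by Proposition~\ref{covering_lemma} and then correct it by a fiber-preserving diffeomorphism so that the pulled-back volume form becomes flat. Concretely, let $\psi_0:D^2\times S^1\to Q$ be the covering map of Proposition~\ref{covering_lemma}, so that $\psi_0^*\bm{u}=\partial_\zeta$ and Eq.~\eqref{straight_u} already holds for $\psi_0$. The only remaining task is to arrange Eq.~\eqref{flat_volume} while leaving the $\bm{u}$-straightening intact.

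First I would analyze the pulled-back volume form $\psi_0^*\Omega$. Since $\Phi_\theta$ is volume-preserving its generator satisfies $L_{\bm{u}}\Omega=0$, and since $\psi_0$ intertwines $\partial_\zeta$ with $\bm{u}$ we obtain $L_{\partial_\zeta}(\psi_0^*\Omega)=\psi_0^*(L_{\bm{u}}\Omega)=0$. Writing any top form on $D^2\times S^1$ as $g\,d\zeta\wedge dx\wedge dy$, this invariance forces $\partial_\zeta g=0$, so that $\psi_0^*\Omega=f(x,y)\,d\zeta\wedge dx\wedge dy$ for a function $f$ depending on $(x,y)\in D^2$ alone; $f$ is strictly positive because $\psi_0$ is a local diffeomorphism and $\Omega$ is nonvanishing.

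Next I would look for the correction in the form $\psi=\psi_0\circ E$. To preserve Eq.~\eqref{straight_u} the diffeomorphism $E$ must satisfy $E_*\partial_\zeta=\partial_\zeta$, which forces the product structure $E(x,y,\zeta)=(G(x,y),\zeta+h(x,y))$ with $G:D^2\to D^2$ a diffeomorphism and $h$ smooth. A direct computation then gives $\psi^*\Omega=E^*\big(f\,d\zeta\wedge dx\wedge dy\big)=(f\circ G)\,J_G\,d\zeta\wedge dx\wedge dy$, where $J_G=\det DG$ and, crucially, the $h$-dependence drops out because it would contribute a triple wedge of one-forms pulled back from the two-dimensional disc, which necessarily vanishes. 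Thus Eq.~\eqref{flat_volume} is equivalent to the single scalar equation $(f\circ G)\,J_G=c_0$ on $D^2$, i.e.\ to the requirement that $G$ pull back the two-dimensional volume form $f\,dx\wedge dy$ to the constant form $c_0\,dx\wedge dy$.

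The hard part is solving this prescribed-Jacobian (Moser) problem on the disc, subject to the boundary constraint $G(\partial D^2)=\partial D^2$. The necessary compatibility condition is equality of total masses, $\int_{D^2} f\,dx\,dy=c_0\,\mathrm{vol}(D^2)$; this is exactly what the choice $c_0=\int_Q\Omega/(2\pi^2)$ guarantees, via the change-of-variables formula applied to $\psi_0$ (when the distinguished orbit generates $\pi_1(Q)$, so that $\psi_0$ has unit degree and $\int_{D^2\times S^1}\psi_0^*\Omega=\int_Q\Omega$). Granting this normalization, I would invoke the Dacorogna--Moser theorem, i.e.\ Moser's deformation method adapted to a manifold with boundary, to produce a diffeomorphism $G$ of $D^2$ with $(f\circ G)\,J_G=c_0$ and $G(\partial D^2)=\partial D^2$. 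Taking $h\equiv 0$ and $E(x,y,\zeta)=(G(x,y),\zeta)$, the composition $\psi=\psi_0\circ E$ is again a covering map satisfying both Eqs.~\eqref{straight_u} and \eqref{flat_volume}. The main obstacle is therefore purely two-dimensional and analytic, namely existence and boundary regularity for the Moser problem on $D^2$; the topology and the $\zeta$-direction are handled for free by Proposition~\ref{covering_lemma} and the invariance argument above.
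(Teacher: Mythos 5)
Your proposal follows essentially the same route as the paper's own proof: compose the covering map of Proposition~\ref{covering_lemma} with a fiber-preserving correction $E(x,y,\zeta)=(G(x,y),\zeta+h(x,y))$, use $L_{\bm{u}}\Omega=0$ to reduce Eq.\,\eqref{flat_volume} to a prescribed-Jacobian problem for $G$ on $D^2$, and solve it by Moser's theorem with the mass-matching normalization $c_0=\int_Q\Omega/(2\pi^2)$. If anything, you are slightly more careful than the paper on two points: invoking the boundary-adapted (Dacorogna--Moser) form of Moser's theorem, and flagging that the mass-matching identity $2\pi\int_{D^2}f\,dx\,dy=\int_Q\Omega$ requires the covering to have degree one, a hypothesis the paper's proof uses implicitly without comment.
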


\begin{proof}
By Lemma \ref{covering_lemma} we may find a covering map $\overline{\psi}$ such that $\overline{\psi}^*\bm{u} = \partial_{\overline{\zeta}}$. We will find a diffeomorphism $E:D^2\times S^1\rightarrow D^2\times S^1$ such that $\psi = \overline{\psi}\circ E$ satisfies Eqs.\,\eqref{straight_u}-\eqref{flat_volume}.

In order to ensure $\partial_\zeta = \psi^*\bm{u} = E^*\overline{\psi}^* \bm{u} = E^*\partial_{\overline{\zeta}}$ the diffeomorphism $E=(x,y,\zeta)$ must satisfy
\begin{align}
x(\overline{x},\overline{y},\overline{\zeta} + \theta)& =x(\overline{x},\overline{y},\overline{\zeta} ) \\
y(\overline{x},\overline{y},\overline{\zeta} + \theta) & =y(\overline{x},\overline{y},\overline{\zeta} )\\
\zeta(\overline{x},\overline{y},\overline{\zeta} + \theta) & = \zeta(\overline{x},\overline{y},\overline{\zeta} )+ \theta.
\end{align} 
These conditions will be satisfied if and only if $(x,y) = \phi(\overline{x},\overline{y})$ for some diffeomorphism $\phi:D^2\rightarrow D^2$ and $\zeta(\overline{x},\overline{y},\overline{\zeta}) = \overline{\zeta} + S(\overline{x},\overline{y})$ for some smooth function $S:D^2\rightarrow S^1$.

Because $L_{\bm{u}}\Omega = 0$, we know in advance that $L_{\partial_{\overline{\zeta}}}\overline{\psi}^*\Omega = 0.$ Writing $\overline{\psi}^*\Omega = \overline{\rho}\,d\overline{\zeta}\wedge d\overline{x}\wedge d\overline{y}$, this means $\partial_{\overline{\zeta}}\overline{\rho} = 0$ or $\overline{\rho} = \overline{\rho}(\overline{x},\overline{y})$. Note that $\overline{\rho}$ also satisfies $2\pi \int_{D^2}\overline{\rho}\,d\overline{x}\,d\overline{y} = \int_Q\Omega$. Therefore, assuming $E$ is of the form determined in the previous paragraph, the pullback $\psi^*\Omega$ is given by
\begin{align}
\psi^*\Omega &= E^*\overline{\psi}^*\Omega = E^*(\overline{\rho}\,d\overline{\zeta}\wedge d\overline{x}\wedge d\overline{y}) \nonumber\\
&= d\zeta\wedge \phi^*(\overline{\rho}\,d\overline{x}\wedge d\overline{y}).
\end{align}
Note that, regardless of the form of $\phi$, we must have
\begin{align}
\int_{D^2}\phi^*(\overline{\rho}\,d\overline{x}\,d\overline{y}) = \frac{1}{2\pi}\int_Q\Omega.
\end{align}
According to Moser's theorem,\cite{Moser_1965} if $\lambda dx\wedge dy$ is any volume form on $D^2$ with the same total volume as $\overline{\rho}\,d\overline{x}\wedge d\overline{y}$ then there is a diffeomorphism $\phi:D^2\rightarrow D^2$ such that $\phi^*(\overline{\rho}\,d\overline{x}\wedge d\overline{y}) = \lambda dx\wedge dy$. In particular, we may find such a $\phi$ for $\lambda = \int_Q\Omega / (2\pi^2)$, which gives the desired result. Note that $E$ is not unique; $S$ is completely free, and $\phi$ is only fixed modulo volume-preserving conjugations of $D^2$.
\end{proof}

%As a final comment, it is worth mentioning that the governing equations admit a Kelvin--Noether circulation theorem. For any closed loop $\overline{\gamma}(\meanv)$ moving along the Lagrangian-mean velocity $\meanv$, then
%%
%\begin{equation}
%	\frac{\mathrm{d}}{\mathrm{d}t} 
%		\oint_{\overline{\gamma}(\meanv) } \meanu \cdot \mathrm{d} \vec{x} =0.
%\end{equation}
%%
%Note that, within the current GLM framework, the closed loop is advected by the Lagrangian-averaged velocity $\meanv$ while the velocity appearing in the circulation integral is the Eulerian-averaged velocity $\meanu$.

% Create the reference section using BibTeX:
%\bibliography{/Users/josh/Dropbox/Apps/Texpad/latex/cumulative_bib_file.bib}
%\bibliography{cumulative_bib_file.bib}
%%%%%%%%%%%%%%%%%%%%%%%%%%%%%%%%%%%%%

%merlin.mbs aipnum4-1.bst 2010-07-25 4.21a (PWD, AO, DPC) hacked
%Control: key (0)
%Control: author (8) initials jnrlst
%Control: editor formatted (1) identically to author
%Control: production of article title (-1) disabled
%Control: page (0) single
%Control: year (1) truncated
%Control: production of eprint (0) enabled
\providecommand{\noopsort}[1]{}\providecommand{\singleletter}[1]{#1}%
%

%%%%%%%%%%%%%%%%%%%%%%%%%%%%%%%%%%%%

\end{document}